\documentclass[11pt,a4paper]{article}

\usepackage[T1]{fontenc}
\usepackage[utf8]{inputenc}
\usepackage{lmodern}
\usepackage{amsmath,amssymb,mathtools,amsthm}
\usepackage{booktabs}
\usepackage[margin=1.05in]{geometry}
\usepackage[expansion=false]{microtype}
\usepackage[square,numbers,sort&compress]{natbib}
\usepackage[hidelinks]{hyperref}

\theoremstyle{plain}
\newtheorem{theorem}{Theorem}[section]

\newtheorem{lemma}[theorem]{Lemma}
\newtheorem{corollary}[theorem]{Corollary}
\theoremstyle{definition}

\newtheorem{hypothesis}[theorem]{Hypothesis}
\newtheorem{remark}[theorem]{Remark}

\numberwithin{equation}{section}
\newcommand{\E}{\mathbb{E}}

\newcommand{\Lop}{\mathcal{L}}
\newcommand{\vk}{\mathbf{v}_\kappa}
\newcommand{\Pplus}{\mathcal{P}_{+}}
\newcommand{\Cex}{\mathcal{C}}
\newcommand{\M}[1][]{\def\Mtmp{#1}\ifx\Mtmp\empty[M]\else[M,~#1]\fi}

\title{\bfseries Buying at Support:\\ The Entry Problem under Multi-Skew
Brownian Motion}
\author{Jun Maeda\thanks{Any errors are the author's own.}\\[4pt]
\normalsize \texttt{jun.maeda@warwickgrad.net}}
\date{October 2026}

\begin{document}
\maketitle

\begin{abstract}
We study when to buy a share that will later be sold optimally, when the
price follows a geometric multi-skew Brownian motion whose skew levels model
support and resistance. The reward for buying is the exit premium of the
liquidation problem solved in a companion paper. This premium is strictly
$r$-subharmonic inside the exit continuation region, so buying is optimal
only at levels where the price is pushed upward, and the entry problem
reduces to a finite one whose value is an upper concave hull of finitely many
points. With one support and one resistance, the optimal rule is to buy
exactly at the support, in every exit regime, with an explicit value.
\end{abstract}

\section{Introduction}
\label{sec:intro}

The companion paper~\cite{M26}, cited as \M{} below, solves the exit
problem
\[
v(x)=\sup_{\tau}\E_x\bigl[e^{-r\tau}S_\tau\bigr]
\]
for a geometric multi-skew Brownian motion $S$ with local-time pushes at
finitely many levels. It concludes that ``sell into resistance'' is optimal
exactly when a closed-form criterion holds \M[Thm.~4.3]. The folklore slogan
has a second half, \emph{buy at support}. This paper shows that the second
half holds with no qualification at all.

Buying \emph{without} a subsequent sale is not a meaningful problem: with
$\mu<r$, waiting always lowers the expected discounted purchase price
(Remark~\ref{rem:naive}). An investor buys in order to acquire the right to
sell later, so the reward for entering at $x$ is $v(x)-x$. The entry problem
is therefore
\begin{equation}
u(x)=\sup_{\tau}\E_x\bigl[e^{-r\tau}W(S_\tau)\bigr],
\qquad W:=v-g,\quad g(x)=x,
\label{eq:entry}
\end{equation}
with $e^{-r\tau}W(S_\tau):=0$ on $\{\tau=\infty\}$. The function $W$ is the
one whose positivity is established in \M[Prop.~5.1]; there it is only a step
in a verification argument, and nothing is optimised over it.

\paragraph{Results.}
\begin{enumerate}
\item \emph{Entry duality} (Theorem~\ref{thm:duality}). For any finite
  configuration of skew points, the entry region is contained in
  $\Pplus=\{z:\beta_z>0\}$. These are exactly the levels at which, by the
  short-time argument of \citet{AlvarezSalminen17}, the pay-off fails to be
  locally $r$-excessive, so they generate the exit continuation region. One
  buys exactly where the value of waiting to sell comes from.
\item \emph{Finite hull} (same theorem). $u/\varphi_r$ is the upper concave
  hull of $(0,0)$ and the finitely many points
  $\bigl(F(z),W(z)/\varphi_r(z)\bigr)$, $z\in\Pplus$, continued flat to the
  right. The entry region is the set of those $z$ whose point lies on the
  hull. Given the exit solution, the entry problem is finite-dimensional.
\item \emph{Support and resistance} (Corollary~\ref{cor:main}). With
  $\beta_L>0>\beta_H$ the entry region is $\{L\}$ in regimes A, B and C
  alike, and $u$ is explicit. There is no free boundary and no smooth fit;
  $u$ has a concave corner at $L$.
\item \emph{Two supports} (Corollary~\ref{cor:samesign}). With
  $\beta_L,\beta_H>0$ the entry region is $\{L\}$, $\{H\}$ or $\{L,H\}$
  according to where $W(L)/W(H)$ falls relative to the window
  $[\psi_r(L)/\psi_r(H),\,\varphi_r(L)/\varphi_r(H)]$, which always contains
  $1$ in its interior.
\end{enumerate}

\paragraph{Related literature.}
The entry--exit formulation itself is not new. In the path-dependent
regime-switching model of support and resistance, \citet{Henderson26} pose a
buyer's problem as a double stopping problem and reduce it to one of exactly
this form, with reward $g=V-u$ for a utility $u$ \citep[(5.1)--(5.2)]{Henderson26}.
There the buyer never buys in two of the three regimes (apart from the
trivial point $0$ when it is absorbing), and in the third the buying set lies
in $(L,A]$ and, when the decay rate $\lambda_0$ of the $+$ regime vanishes,
is an interval $[a,b]$ with $L<a<b\le A$ \citep[Thm.~5.2]{Henderson26}.
\citet{LeungLi15} use the reward $v-g-c$ for an Ornstein--Uhlenbeck spread and
again obtain an entry interval, with a disconnected waiting region;
\citet{LeungLiWang14} find threshold rules for both a single entry--exit pair
and the corresponding switching problem for a Cox--Ingersoll--Ross process
with fixed costs, together with conditions under which it is optimal never to
enter. For
general one-dimensional diffusions with fixed costs,
\citet{ZervosJohnsonAlazemi13} show that buying in order to sell later is
never optimal when $0$ is a natural boundary, as for geometric Brownian
motion. Our model reduces to geometric Brownian motion when every
$\beta_z=0$, and then indeed $W\equiv0$; the skew is exactly what breaks that
no-trade conclusion.

On the stopping side, \citet{AlvarezSalminen17} show that for skew Brownian
motion the skew point lies in the continuation region for non-decreasing
pay-offs without a concave kink there, and \citet{SalminenTa15} relate the
failure of smooth fit to singularities of the scale and speed measures.
Isolated stopping points at a skew level have been observed before
(Remark~\ref{rem:AS}). What is new here is the shape of the whole answer: in
the memoryless skew model the entry region is never an interval but a finite
set of levels, determined by a finite hull, and in the support--resistance
model it is the single point $L$. We have found no result of the form of
Theorem~\ref{thm:duality} in the literature.

\paragraph{Outline.}
Section~\ref{sec:setting} sets up the model, recalls from \M{} the
It\^o--Tanaka formula at the skew levels, the crossing matrices and the
fundamental pair, and states the properties of the exit solution that the
entry problem uses (Hypothesis~\ref{hyp:exit}). Section~\ref{sec:duality}
proves the entry duality and the finite-hull representation
(Theorem~\ref{thm:duality}) and discusses transaction costs.
Section~\ref{sec:main} specialises to one support and one resistance,
Section~\ref{sec:samesign} to two supports, and Section~\ref{sec:othersigns}
to the remaining sign pairs. Section~\ref{sec:disc} concludes with open
problems, and Appendix~\ref{app:num} reports a numerical check that assumes
none of the results.

\section{Setting}
\label{sec:setting}

\subsection{The process}

Fix skew levels $0<z_1<\dots<z_n$ with coefficients
$\beta_i:=\beta_{z_i}\in(-1,1)\setminus\{0\}$, and let
\[
dS_t=\mu S_t\,dt+\sigma S_t\,dW_t+\sum_{i=1}^n\frac{\beta_i}{z_i}S_t\,
d\ell^{z_i}_t(S),\qquad S_0=x>0,
\]
the $n$-level version of \M[(2)]; it has a unique strong solution, and $S$
is a regular strong Markov diffusion on $(0,\infty)$ \M[Prop.~2.1]. Write
$Z:=\{z_1,\dots,z_n\}$ and
\[
\Pplus:=\{z\in Z:\beta_z>0\}
\]
for the upward-skew (\emph{support-type}) levels. Throughout we assume, as
in \M[Ass.~2.4],
\begin{equation}
0<r\qquad\text{and}\qquad\mu<r .
\label{eq:ass}
\end{equation}

Let $\Lop f:=\tfrac12\sigma^2x^2f''+\mu xf'$ off $Z$, and for $f$ continuous
and piecewise $C^2$ with one-sided derivatives at $z\in Z$ put
\begin{equation}
\Delta_z(f):=(1+\beta_z)f'(z^+)-(1-\beta_z)f'(z^-).
\label{eq:Delta}
\end{equation}

\begin{lemma}[It\^o--Tanaka at the skew levels {\M[Lemma~2.2 and \S2]}]
\label{lem:ito}
Let $f$ be continuous on $(0,\infty)$ and $C^2$ on the closure of each
interval of $(0,\infty)\setminus Z$, with finite one-sided limits of $f'$ and
$f''$ at each $z\in Z$. Then
\[
d\bigl(e^{-rt}f(S_t)\bigr)=e^{-rt}(\Lop-r)f(S_t)\,dt
+\tfrac12e^{-rt}\sum_{z\in Z}\Delta_z(f)\,d\ell^z_t+dN_t
\]
for a local martingale $N$, where $(\Lop-r)f$ is defined off $Z$, a set at
which $S$ spends zero time. Moreover $f$ lies in the domain of the generator
if and only if $\Delta_z(f)=0$ for every $z\in Z$ (the \emph{transmission
condition}).
\end{lemma}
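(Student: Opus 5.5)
We derive the formula from the classical Itô--Tanaka (Meyer--Itô) formula for semimartingales, applied first to $f(S_t)$ and then to the product with $e^{-rt}$. The transmission-condition statement then follows from the identification of the generator of a regular diffusion through its scale and speed.

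\textbf{Step 1 (semimartingale decomposition).} Under the regularity assumed, $f$ is a difference of convex functions. Its second-derivative measure is
\[
f''(x)\,dx+\sum_{z\in Z}\bigl(f'(z^+)-f'(z^-)\bigr)\delta_z .
\]
Meyer--Itô then gives
\[
df(S_t)=f'(S_t^-)\,dS_t+\tfrac12 f''(S_t)\,d\langle S\rangle_t+\tfrac12\sum_z\bigl(f'(z^+)-f'(z^-)\bigr)\,dL^z_t .
\]
Here $f'(S_t^-)$ is the left derivative, and $L^z$ is the right-continuous semimartingale local time, whereas the SDE uses the symmetric local time $\ell^z$.

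\textbf{Step 2 (reorganising the local-time terms).} Since $d\langle S\rangle_t=\sigma^2S_t^2\,dt$ and $S$ spends zero Lebesgue time on $Z$, the $dt$ terms combine to $\Lop f(S_t)\,dt$ off $Z$. The remaining terms at each $z$ are three:
\begin{itemize}
\item the push term $f'(z^-)\,\beta_z\,d\ell^z_t$, coming from $f'(S^-)\,dS$ through the skew drift (which is carried by $\{S=z\}$);
\item the kink term $\tfrac12\bigl(f'(z^+)-f'(z^-)\bigr)\,dL^z_t$;
\item the martingale part $\sigma S f'(S^-)\,dW$, which goes into $N$.
\end{itemize}
To convert $L^z$ into $\ell^z$, we use the relations for skew processes, $L^z=\ell^z+\tfrac12(L^z-L^{z-})$ and $L^z-L^{z-}=2\int 1_{\{S=z\}}\,dS=2\beta_z\,d\ell^z$. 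These give $L^z=(1+\beta_z)\ell^z$ and $L^{z-}=(1-\beta_z)\ell^z$.

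Collecting the terms,
\[
\beta_z f'(z^-)+\tfrac12(1+\beta_z)\bigl(f'(z^+)-f'(z^-)\bigr)
=\tfrac12\bigl[(1+\beta_z)f'(z^+)-(1-\beta_z)f'(z^-)\bigr]=\tfrac12\Delta_z(f).
\]
Multiplying by $e^{-rt}$ through integration by parts contributes $-re^{-rt}f\,dt$, which completes the displayed formula.

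\textbf{Step 3 (transmission condition).} For the "if" direction, assume $\Delta_z(f)=0$ for every $z\in Z$ and that $f$ is bounded with bounded $\Lop f$ on compacts. Then $e^{-rt}f(S_t)-\int_0^t e^{-rs}(\Lop-r)f(S_s)\,ds$ is a local martingale. Localising with Dynkin's formula shows $f$ is in the domain, with generator $\Lop f$ (defined a.e.).

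For the "only if" direction, $S$ is a regular diffusion whose scale density jumps at $z$ by the ratio $(1-\beta_z)/(1+\beta_z)$. Domain functions must have $df/ds$ continuous, that is $\frac{f'(z^+)}{1-\beta_z}\cdot(\text{common factor})=\frac{f'(z^-)}{1+\beta_z}\cdot(\dots)$. This is equivalent to $\Delta_z(f)=0$.

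\textbf{Main obstacle.} The delicate step is the local-time bookkeeping in Step 2, namely the identity $L^z=(1+\beta_z)\ell^z$ and the choice of left versus symmetric derivative. Getting a sign or factor wrong there would break the exact form of $\Delta_z$.

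I would first check the formula on $f(x)=|x-z|$ and on $f(x)=x$ with a single skew level, comparing against the known skew Brownian motion case ($\beta$ in the drift of $\ell$, Harrison--Shepp), and then let linearity and localisation handle the general $f$.
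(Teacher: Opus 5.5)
Your derivation is correct and is the paper's computation in a different local-time convention. The paper uses the symmetric It\^o--Tanaka--Meyer formula, in which the push is integrated against $\tfrac12\bigl(f'(z^+)+f'(z^-)\bigr)$ and the kink against $\ell^z$ directly, so no conversion is needed. Your left-derivative/right-local-time version needs $L^z=(1+\beta_z)\ell^z$, which you obtain correctly from $L^z-L^{z-}=2\int\mathbf 1_{\{S=z\}}\,dS=2\beta_z\ell^z$; this is the same identity that produces the symmetric formula.

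The paper does not prove the transmission-condition clause and defers to \M{}, so your scale-function argument is an addition. It is sound, but the step ``domain functions have $df/ds$ continuous'' tacitly uses that the speed measure has no atom at $z$. That is the zero-occupation-time property, which you should derive from the occupation-times formula rather than assert.
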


The local-time coefficient arises as follows: the jump of $f'$ at $z$
contributes $\tfrac12\bigl(f'(z^+)-f'(z^-)\bigr)\,d\ell^z$ to the
It\^o--Tanaka--Meyer formula, and the push $(\beta_z/z)S\,d\ell^z$ in the
dynamics contributes $\tfrac12\beta_z\bigl(f'(z^+)+f'(z^-)\bigr)\,d\ell^z$;
their sum is $\tfrac12\Delta_z(f)\,d\ell^z$. Following \M[\S2], we call $f$
\emph{$r$-harmonic} on an interval $I$ if $(\Lop-r)f=0$ on $I\setminus Z$
and $\Delta_z(f)=0$ for $z\in Z\cap I$, and \emph{$r$-superharmonic} if
instead $(\Lop-r)f\le0$ and $\Delta_z(f)\le0$. By Lemma~\ref{lem:ito},
$e^{-rt}f(S_t)$ is then a local martingale, respectively a local
supermartingale, while $S$ stays in $I$; a non-negative function that is
$r$-superharmonic on $(0,\infty)$ is therefore $r$-excessive. For the pay-off
$g(x)=x$,
\begin{equation}
(\Lop-r)g=(\mu-r)x<0,\qquad \Delta_z(g)=2\beta_z .
\label{eq:gfacts}
\end{equation}

\subsection{Exponents, crossing matrices and smooth fit}

The functions $x^\alpha$ solving $(\Lop-r)f=0$ have
$\tfrac12\sigma^2\alpha(\alpha-1)+\mu\alpha-r=0$, whose roots satisfy
$\alpha_1<0<1<\alpha_2$ under \eqref{eq:ass} \M[Lemma~3.1]: the product of
the roots is $-2r/\sigma^2<0$, and the quadratic equals $\mu-r<0$ at
$\alpha=1$. Put
\[
\kappa_1:=\frac{\alpha_2-1}{\alpha_2-\alpha_1},\qquad
\kappa_2:=\frac{1-\alpha_1}{\alpha_2-\alpha_1},
\]
both positive with $\kappa_1+\kappa_2=1$. On each interval of
$(0,\infty)\setminus Z$ an $r$-harmonic function is
$Ax^{\alpha_1}+Bx^{\alpha_2}$, which we identify with its coefficient pair
$c=(A,B)^{\!\top}$; with $e_x:=(x^{\alpha_1},x^{\alpha_2})$ for $x>0$, its
value at $x$ is $e_xc$.

\begin{lemma}[crossing {\M[Lemma~3.2]}]
\label{lem:crossing}
Let $z\in Z$ with coefficient $\beta=\beta_z$, and let $f$ equal
$A^dx^{\alpha_1}+B^dx^{\alpha_2}$ just below $z$ and
$A^ux^{\alpha_1}+B^ux^{\alpha_2}$ just above, with pairs
$c^d=(A^d,B^d)^{\!\top}$ and $c^u=(A^u,B^u)^{\!\top}$. Then $f$ is continuous at
$z$ with $\Delta_z(f)=0$ if and only if $c^{d}=T_\beta(z)\,c^{u}$, where
\[
T_\beta(z)=\frac{1}{(\alpha_2-\alpha_1)(1-\beta)}
\begin{pmatrix}
(\alpha_2-\alpha_1)-(\alpha_1+\alpha_2)\beta & -2\alpha_2\beta z^{\alpha_2-\alpha_1}\\[3pt]
2\alpha_1\beta z^{\alpha_1-\alpha_2} & (1+\beta)\alpha_2-(1-\beta)\alpha_1
\end{pmatrix}.
\]
Moreover $T_\beta(z)^{-1}=T_{-\beta}(z)$.
\end{lemma}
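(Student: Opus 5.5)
The plan is to write the two conditions, continuity at $z$ and $\Delta_z(f)=0$, as a $2\times2$ linear system relating $c^d$ and $c^u$, and solve it for $c^d$. With $e_z=(z^{\alpha_1},z^{\alpha_2})$ and $e'_z:=(\alpha_1z^{\alpha_1-1},\alpha_2z^{\alpha_2-1})$, continuity reads $e_zc^d=e_zc^u$. Since $f'(z^\pm)=e'_zc^{u/d}$, the transmission condition \eqref{eq:Delta} reads $(1-\beta)e'_zc^d=(1+\beta)e'_zc^u$. Let $M(z)$ be the matrix with rows $e_z$ and $e'_z$, and let $D_\beta:=\operatorname{diag}\bigl(1,(1+\beta)/(1-\beta)\bigr)$. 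The two conditions together are then $M(z)c^d=D_\beta M(z)c^u$.

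The matrix $M(z)$ is a Wronskian-type matrix with determinant $(\alpha_2-\alpha_1)z^{\alpha_1+\alpha_2-1}\neq0$, because $\alpha_1<\alpha_2$. So the system is equivalent to
\[
c^d=M(z)^{-1}D_\beta M(z)\,c^u.
\]
This proves the ``if and only if'' with $T_\beta(z):=M(z)^{-1}D_\beta M(z)$. The same computation gives $T_\beta(0)$-free scaling: every entry depends on $z$ only through $z^{\pm(\alpha_2-\alpha_1)}$ in the off-diagonal positions, which matches the displayed form.

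The main work is checking that this product equals the displayed matrix. Write $D_\beta=I+\frac{2\beta}{1-\beta}E_{22}$, where $E_{22}$ is the matrix unit. Then
\[
T_\beta=I+\frac{2\beta}{1-\beta}\,M^{-1}E_{22}M .
\]
The product $M^{-1}E_{22}M$ is the outer product of the second column of $M^{-1}$ with $e'_z$. That column is
\[
\frac{1}{\alpha_2-\alpha_1}\bigl(-z^{1-\alpha_1},\;z^{1-\alpha_2}\bigr)^{\top}.
\]
Multiplying out gives the $(1,1)$ entry $1-\frac{2\beta\alpha_1}{(1-\beta)(\alpha_2-\alpha_1)}$. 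Over the common denominator this is $\bigl[(\alpha_2-\alpha_1)(1-\beta)-2\beta\alpha_1\bigr]/\bigl[(\alpha_2-\alpha_1)(1-\beta)\bigr]$. The numerator equals $(\alpha_2-\alpha_1)-(\alpha_1+\alpha_2)\beta$, as displayed.

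The $(1,2)$ entry is $-2\beta\alpha_2z^{\alpha_2-\alpha_1}/[(\alpha_2-\alpha_1)(1-\beta)]$ and the $(2,1)$ entry is $2\beta\alpha_1z^{\alpha_1-\alpha_2}/[(\alpha_2-\alpha_1)(1-\beta)]$. The $(2,2)$ entry is $1+\frac{2\beta\alpha_2}{(1-\beta)(\alpha_2-\alpha_1)}$, whose numerator $(\alpha_2-\alpha_1)(1-\beta)+2\beta\alpha_2$ simplifies to $(1+\beta)\alpha_2-(1-\beta)\alpha_1$. All four entries agree with the statement. The only real risk is a sign or exponent slip in this expansion, so I would recheck it by confirming $\det T_\beta=\det D_\beta=(1+\beta)/(1-\beta)$.

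The inverse identity follows directly from the conjugation form, since $T_\beta^{-1}=M^{-1}D_\beta^{-1}M$. Here $D_\beta^{-1}=\operatorname{diag}\bigl(1,(1-\beta)/(1+\beta)\bigr)=D_{-\beta}$, so $T_\beta^{-1}=T_{-\beta}$ with no further computation.
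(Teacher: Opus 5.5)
Your proof is correct and follows the paper's route: both of you write continuity and the transmission condition as a nondegenerate $2\times2$ linear system and solve it for $c^d$. Your conjugation form $T_\beta(z)=M(z)^{-1}D_\beta M(z)$ also carries out the entrywise check that the paper leaves to the companion paper. It reduces $T_\beta^{-1}=T_{-\beta}$ to the one-line identity $D_\beta^{-1}=D_{-\beta}$, which is the algebraic form of the paper's remark that the condition with parameter $\beta$ read downward is the condition with $-\beta$ read upward.
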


\begin{proof}
Put $a:=A^uz^{\alpha_1}$, $b:=B^uz^{\alpha_2}$, $\tilde a:=A^dz^{\alpha_1}$ and
$\tilde b:=B^dz^{\alpha_2}$, so that $zf'(z^+)=\alpha_1a+\alpha_2b$ and
$zf'(z^-)=\alpha_1\tilde a+\alpha_2\tilde b$. Continuity at $z$ and
$\Delta_z(f)=0$ read
\[
\tilde a+\tilde b=a+b,\qquad
(1-\beta)(\alpha_1\tilde a+\alpha_2\tilde b)=(1+\beta)(\alpha_1a+\alpha_2b).
\]
Given $(a,b)$, this is a linear system for $(\tilde a,\tilde b)$ with
determinant $(1-\beta)(\alpha_2-\alpha_1)\ne0$, so it has exactly one
solution; solving it gives $c^d=T_\beta(z)c^u$ \M[Lemma~3.2]. Hence the two
conditions hold if and only if $c^d=T_\beta(z)c^u$. The identity
$T_\beta^{-1}=T_{-\beta}$ is also proved in \M[Lemma~3.2]: the transmission
condition with parameter $\beta$, read from above to below, is the one with
parameter $-\beta$ read from below to above.
\end{proof}

\begin{lemma}[smooth fit {\M[Lemma~3.3]}]
\label{lem:smoothfit}
If $f=Ax^{\alpha_1}+Bx^{\alpha_2}$ satisfies $f(m)=m$ and $f'(m)=1$ for some
$m>0$, then $(A,B)^{\!\top}=\vk(m):=\bigl(\kappa_1m^{1-\alpha_1},\,
\kappa_2m^{1-\alpha_2}\bigr)^{\!\top}$.
\end{lemma}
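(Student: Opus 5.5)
The plan is to solve the two conditions $f(m)=m$ and $f'(m)=1$ as a $2\times2$ linear system in $(A,B)$. Following the proof of Lemma~\ref{lem:crossing}, I will work with the rescaled unknowns $a:=Am^{\alpha_1}$ and $b:=Bm^{\alpha_2}$, because then the system has constant coefficients. The condition $f(m)=m$ becomes $a+b=m$. Multiplying $f'(m)=1$ by $m$ gives $\alpha_1a+\alpha_2b=m$.

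The determinant of this system is $\alpha_2-\alpha_1$, which is positive by the ordering $\alpha_1<0<1<\alpha_2$ recalled above. So the system has exactly one solution. Cramer's rule gives
\[
a=\frac{(\alpha_2-1)m}{\alpha_2-\alpha_1}=\kappa_1m,\qquad
b=\frac{(1-\alpha_1)m}{\alpha_2-\alpha_1}=\kappa_2m.
\]
Converting back, $A=am^{-\alpha_1}=\kappa_1m^{1-\alpha_1}$ and $B=bm^{-\alpha_2}=\kappa_2m^{1-\alpha_2}$, which is $\vk(m)$.

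As a check, $\kappa_1+\kappa_2=1$ recovers $a+b=m$, and
\[
\alpha_1\kappa_1+\alpha_2\kappa_2=\frac{\alpha_1\alpha_2-\alpha_1+\alpha_2-\alpha_1\alpha_2}{\alpha_2-\alpha_1}=1
\]
recovers the derivative condition.

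There is no real obstacle in this argument. The only point that needs stating is that the determinant $\alpha_2-\alpha_1$ is nonzero. This guarantees that the pair $(A,B)$ is uniquely determined, so the statement holds as an "if … then" with a unique answer.
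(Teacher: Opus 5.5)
Your proposal is correct and follows the paper's proof: it makes the same substitution $a=Am^{\alpha_1}$, $b=Bm^{\alpha_2}$, obtains the same system $a+b=m$, $\alpha_1a+\alpha_2b=m$, and solves it to $a=\kappa_1m$, $b=\kappa_2m$. Your explicit note that the determinant $\alpha_2-\alpha_1\neq0$ gives uniqueness, and your back-substitution check, are the only additions, and both are sound.
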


\begin{proof}
With $a=Am^{\alpha_1}$ and $b=Bm^{\alpha_2}$ the two conditions read $a+b=m$
and $\alpha_1a+\alpha_2b=m$, whence $a=\kappa_1m$ and $b=\kappa_2m$.
\end{proof}

For a level $z>0$ we write
\begin{equation}
G_z(m):=e_z\vk(m)=\kappa_1z^{\alpha_1}m^{1-\alpha_1}+\kappa_2z^{\alpha_2}m^{1-\alpha_2}
\label{eq:Gz}
\end{equation}
for the value at $z$ of the $r$-harmonic piece anchored by smooth fit at $m$,
provided no skew level lies strictly between $m$ and $z$. In the notation of
\M{}, $G_L$ is the function $G$ of \M[Thm.~4.1] and $G_H$ is that of
\M[(26)].

\subsection{The fundamental pair}

Let $\psi_r$ be the $r$-harmonic function on $(0,\infty)$ with
$\psi_r=x^{\alpha_2}$ on $(0,z_1]$, continued upward through each $z_i$ by
$T_{-\beta_i}(z_i)$ (Lemma~\ref{lem:crossing}), and let $\varphi_r$ be the one
with $\varphi_r=x^{\alpha_1}$ on $[z_n,\infty)$, continued downward by
$T_{\beta_i}(z_i)$. These are the increasing and decreasing fundamental
solutions, as in \M[Def.~3.5]. The proof of \M[Lemma~3.6] is a pointwise
induction across the skew points and carries over verbatim to $n$ levels:
$\psi_r>0$, $\psi_r'>0$, $\varphi_r>0$, $\varphi_r'<0$, and the Wronskian
\begin{equation}
\mathcal W:=\varphi_r\psi_r'-\varphi_r'\psi_r>0
\label{eq:wronskian}
\end{equation}
everywhere, with one-sided derivatives at skew points. Put
\[
F:=\psi_r/\varphi_r .
\]
Then $F$ is continuous and strictly increasing, with $F(0^+)=0$ (on
$(0,z_1]$, $\psi_r=x^{\alpha_2}\to0$ while $\varphi_r\ge\varphi_r(z_1)>0$)
and $F(\infty)=\infty$ (symmetrically). Every $r$-harmonic function on an
interval is of the form $A\varphi_r+B\psi_r$ there, and
$(A\varphi_r+B\psi_r)/\varphi_r=A+BF$ is affine in $F$.

\begin{lemma}[hitting times]
\label{lem:hitting}
For $p>0$ let $\tau_p:=\inf\{t\ge0:S_t=p\}$. Then
$\E_x[e^{-r\tau_p}]=\psi_r(x)/\psi_r(p)$ for $x\le p$ and
$\E_x[e^{-r\tau_p}]=\varphi_r(x)/\varphi_r(p)$ for $x\ge p$.
\end{lemma}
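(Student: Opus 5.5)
The plan is a standard optional-stopping argument applied to $f=\psi_r$ and $f=\varphi_r$, using Lemma~\ref{lem:ito}. Take first the case $x\le p$. Both $\psi_r$ and $\varphi_r$ are continuous and piecewise of the form $Ax^{\alpha_1}+Bx^{\alpha_2}$, so they satisfy the regularity hypotheses of Lemma~\ref{lem:ito}. By construction (Lemma~\ref{lem:crossing}) they obey the transmission condition $\Delta_z=0$ at every $z\in Z$, and $(\Lop-r)=0$ off $Z$. Lemma~\ref{lem:ito} then says that $M_t:=e^{-rt}\psi_r(S_t)$ is a local martingale. On $[0,\tau_p]$ we have $0<\psi_r(S_t)\le\psi_r(p)$, because $\psi_r$ is positive and increasing and $S_t\le p$ before $\tau_p$. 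So the stopped process $M_{t\wedge\tau_p}$ is a bounded local martingale, hence a true martingale, and
\[
\psi_r(x)=\E_x\bigl[e^{-r(t\wedge\tau_p)}\psi_r(S_{t\wedge\tau_p})\bigr].
\]

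Next let $t\to\infty$ by bounded convergence. On $\{\tau_p<\infty\}$ the integrand converges to $e^{-r\tau_p}\psi_r(p)$. On $\{\tau_p=\infty\}$ it is bounded by $e^{-rt}\psi_r(p)\to0$, since $r>0$. This gives $\psi_r(x)=\psi_r(p)\E_x[e^{-r\tau_p}]$, with the convention $e^{-r\tau_p}=0$ on $\{\tau_p=\infty\}$. This step does not need to know whether $\tau_p<\infty$ almost surely; the discounting alone removes the event $\{\tau_p=\infty\}$.

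The case $x\ge p$ is symmetric. Use $\varphi_r$, which is positive and decreasing, so on $[0,\tau_p]$ we have $S_t\ge p$ and $0<\varphi_r(S_t)\le\varphi_r(p)$. The same bounded-martingale and discounting argument gives $\E_x[e^{-r\tau_p}]=\varphi_r(x)/\varphi_r(p)$. When $x=p$ both formulas give $1$, consistent with $\tau_p=0$.

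The only delicate point is justifying that the local martingale from Lemma~\ref{lem:ito} is a genuine martingale up to $\tau_p$. The monotonicity and positivity of $\psi_r$ and $\varphi_r$ (from the extension of \M[Lemma~3.6]) give uniform boundedness on the relevant half-line, which settles it. No control near $0$ or $\infty$ is needed, because the process is confined to $(0,p]$ or $[p,\infty)$ before $\tau_p$, and the bounds there come from monotonicity rather than from boundary behaviour.
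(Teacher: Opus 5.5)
Your proposal is correct and follows the paper's proof essentially step for step. The transmission condition makes $e^{-rt}\psi_r(S_t)$ (resp.\ $e^{-rt}\varphi_r(S_t)$) a local martingale, monotonicity bounds the stopped process by $\psi_r(p)$ (resp.\ $\varphi_r(p)$), and discounting with $r>0$ removes the event $\{\tau_p=\infty\}$. The only cosmetic difference is that you pass to the limit in one step by bounded convergence, whereas the paper splits the expectation over $\{\tau_p\le t\}$ and $\{\tau_p>t\}$ and applies monotone convergence to the first piece.
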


\begin{proof}
Let $x\le p$ and $\tau:=\tau_p$. Since $\psi_r$ is $r$-harmonic, Lemma~\ref{lem:ito}
gives
$d\bigl(e^{-rt}\psi_r(S_t)\bigr)=dN_t$ with $N$ a local martingale: the
$dt$-term vanishes off the Lebesgue-null set of times at which $S\in Z$, and
the local-time terms vanish because $\Delta_z(\psi_r)=0$ for every
$z\in Z$, $\psi_r$ having been continued across each skew level by the
crossing matrices (Lemma~\ref{lem:crossing}). Hence
$M_t:=e^{-r(t\wedge\tau)}\psi_r(S_{t\wedge\tau})$ is a local martingale.
By continuity $S_s\le p$ for $s\le\tau$, and $\psi_r$ is increasing, so
$0<M_t\le\psi_r(p)$; a bounded local martingale is a martingale, whence
\[
\psi_r(x)=\E_x[M_t]
=\psi_r(p)\,\E_x\bigl[e^{-r\tau};\tau\le t\bigr]
+\E_x\bigl[e^{-rt}\psi_r(S_t);\tau>t\bigr].
\]
As $t\to\infty$ the first term increases to
$\psi_r(p)\E_x[e^{-r\tau}\mathbf 1_{\{\tau<\infty\}}]$ by monotone
convergence, and the second is at most $\psi_r(p)e^{-rt}\to0$ since $r>0$.
Since $r>0$, we set $e^{-r\tau}:=0$ on $\{\tau=\infty\}$, the limit of
$e^{-rt}$; then $e^{-r\tau}=e^{-r\tau}\mathbf 1_{\{\tau<\infty\}}$ pointwise,
and dividing by $\psi_r(p)>0$ gives the claim. For $x\ge p$ the path stays in $[p,\infty)$ until $\tau$, where $\psi_r$ is
unbounded, so we use $\varphi_r$ instead: it is $r$-harmonic and decreasing,
hence $0<e^{-r(t\wedge\tau)}\varphi_r(S_{t\wedge\tau})\le\varphi_r(p)$, and the
same argument gives $\varphi_r(x)=\varphi_r(p)\,\E_x[e^{-r\tau}]$.
\end{proof}

\subsection{The exit solution as input}

\begin{hypothesis}
\label{hyp:exit}
The exit value $v$ of \M[(6)] is known and has the following form. The exit
continuation region $\Cex:=\{v>g\}$ is a finite union of open intervals, each
with compact closure in $(0,\infty)$. Further, $v$ is continuous, $v=g$ off
$\Cex$, and on each component $J$ of $\Cex$, $v$ is $r$-harmonic in the
sense of Section~\ref{sec:setting}; that is, $v$ is $C^2$ on the closure of each
skew-free piece of $J$ and $\Delta_z(v)=0$ for $z\in Z\cap J$.
\end{hypothesis}

Hypothesis~\ref{hyp:exit} holds in every case that \M{} solves: regimes A,
B, C (Theorems~4.1, 4.3 and 5.2 there), the same-sign case
(Section~7 there, Theorem~7.10), and the remaining sign pairs
(Propositions~8.1 and 8.2 there, with $\Cex=\emptyset$ in the former). It
allows a boundary point of $\Cex$ to be a skew point at which smooth fit
fails, as in regime~C, since only continuity of $v$ is required there.

\begin{remark}[the naive entry problem is degenerate]
\label{rem:naive}
The problem $\inf_\tau\E_x[e^{-r\tau}S_\tau]$ equals $0$ and is attained
by $\tau=\infty$, since the pay-off is non-negative and never buying costs
nothing. Nor is the degeneracy an artefact of allowing $\tau=\infty$: off
$Z$, $e^{-rt}S_t$ is a strict supermartingale by \eqref{eq:gfacts}, so
delaying a purchase that must eventually be made lowers its expected
discounted price. This mirrors the fact that without local time the holder
sells at once \M[\S2.2]. The well-posed entry
problem is \eqref{eq:entry}.
\end{remark}

\begin{remark}[short positions are degenerate]
\label{rem:short}
The mirror round trip, opening a short position and covering it later, is
degenerate under \eqref{eq:ass}. Covering at $\tau$ costs
$\E_x[e^{-r\tau}S_\tau]$, and by the argument of Remark~\ref{rem:naive}
waiting always lowers this cost, so its infimum is $0$: with the convention
that the pay-off vanishes on $\{\tau=\infty\}$ the short seller never
covers, and if covering in finite time is required the infimum is
approached but not attained, so no optimal covering time exists. A
non-trivial short problem needs $\mu>r$, which reverses the standing
assumption and so changes the setting; we do not consider it here.
\end{remark}

\section{Entry duality and the finite hull}
\label{sec:duality}

\begin{lemma}[local structure of the exit premium]
\label{lem:Wlocal}
Under Hypothesis~\ref{hyp:exit}, $W=v-g$ satisfies:
\begin{enumerate}
\item[(i)] $W$ is continuous, non-negative and compactly supported in
  $(0,\infty)$, with $W>0$ exactly on $\Cex$;
\item[(ii)] on $\Cex\setminus Z$, $W$ is $C^2$ and
  $(\Lop-r)W=(r-\mu)x>0$;
\item[(iii)] for $z\in Z\cap\Cex$, $\Delta_z(W)=-2\beta_z$;
\item[(iv)] $\Pplus\subset\Cex$.
\end{enumerate}
\end{lemma}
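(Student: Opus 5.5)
Items (i)--(iii) follow directly from Hypothesis~\ref{hyp:exit} and the facts \eqref{eq:gfacts} about $g$. For (i), $W=v-g$ is continuous because $v$ and $g$ are. It vanishes off $\Cex$ because $v=g$ there, and it is strictly positive on $\Cex=\{v>g\}$ by definition. Since $\Cex$ is a finite union of open intervals each with compact closure in $(0,\infty)$, the support $\overline{\Cex}$ is compact in $(0,\infty)$. Non-negativity holds because $v\ge g$ everywhere, which we obtain by taking $\tau=0$ in the exit problem. For (ii), on a skew-free piece of a component $J$ of $\Cex$ the function $v$ is $C^2$ with $(\Lop-r)v=0$. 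Since $g$ is smooth, $W$ is $C^2$ there, and linearity together with \eqref{eq:gfacts} gives
\[
(\Lop-r)W=-(\Lop-r)g=(r-\mu)x,
\]
which is positive by \eqref{eq:ass}. For (iii), $\Delta_z$ is linear in $f$, and $\Delta_z(v)=0$ for $z\in Z\cap J$ by the hypothesis. Hence $\Delta_z(W)=\Delta_z(v)-\Delta_z(g)=-2\beta_z$.

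The only substantive item is (iv). Fix $z\in\Pplus$ and suppose, for contradiction, that $z\notin\Cex$, so that $v(z)=z$. I would construct an explicit stopping time that does better than selling at once. Take a small $\varepsilon>0$ with $[z-\varepsilon,z+\varepsilon]$ containing no other skew level, and let $\tau:=\inf\{t:S_t\notin(z-\varepsilon,z+\varepsilon)\}$. Apply Lemma~\ref{lem:ito} to $g$, which is $C^2$ with $\Delta_{z'}(g)=2\beta_{z'}$ everywhere, stopped at $\tau$. The local martingale is bounded up to $\tau$, so optional stopping gives
\[
\E_z\bigl[e^{-r\tau}S_\tau\bigr]-z=\E_z\Bigl[\int_0^\tau e^{-rt}(\mu-r)S_t\,dt\Bigr]+\beta_z\,\E_z\Bigl[\int_0^\tau e^{-rt}\,d\ell^z_t\Bigr].
\]
The first term is $O(\E_z[\tau])$. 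The second is at least a constant times $\E_z[\ell^z_\tau]$. By standard scaling for a diffusion started at the skew point, $\E_z[\tau]\asymp\varepsilon^2$ while $\E_z[\ell^z_\tau]\asymp\varepsilon$. So for $\varepsilon$ small the right-hand side is strictly positive, which gives $v(z)>z$, a contradiction.

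This short-time comparison is the argument of \citet{AlvarezSalminen17}, and it is the main obstacle. To make it rigorous without asymptotic hand-waving, I would instead compute both expectations exactly. I would use the $r=0$ analogue of the scale function: apply Lemma~\ref{lem:ito} to $|x-z|$ and to $(x-z)^2$, both with $r$-free generator, to get $\E_z[\ell^z_\tau]\ge c\,\varepsilon$ and $\E_z[\tau]\le C\varepsilon^2$.

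A cleaner alternative avoids the estimates. Take $\tau$ to be the exit time from $(z-\varepsilon,z+\varepsilon)$ and let $h$ be the $r$-harmonic function on that interval with $h=g$ at the two endpoints, built with the crossing matrix of Lemma~\ref{lem:crossing}, so that $v(z)\ge h(z)$. Then we only need $h(z)>z$ for small $\varepsilon$. Now $h-g$ vanishes at both endpoints, satisfies $(\Lop-r)(h-g)=(r-\mu)x$, and has $\Delta_z(h-g)=-2\beta_z<0$. The negative kink dominates the $O(\varepsilon^2)$ smooth forcing and forces $(h-g)(z)\approx\beta_z\varepsilon>0$ to leading order. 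I would carry out this first-order expansion explicitly as the rigorous version of the step.
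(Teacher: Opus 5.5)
Your proofs of (i)--(iii) are the paper's: subtract \eqref{eq:gfacts} from $(\Lop-r)v=0$ and $\Delta_z(v)=0$, and use $v\ge g$.

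For (iv) your proof is correct but localises differently. The mechanism is the same as the paper's: a first-order local-time gain beats a second-order drift loss. The paper stops at a small deterministic time $t$ and imports the short-time expansion \M[(21)], $\E_z[e^{-rt}g(S_t)]-g(z)=\beta_z\E_z[\ell^z_t]+O(t)$ with $\E_z[\ell^z_t]\asymp\sqrt t$. There the discount factor is trivially at least $e^{-rt}$, but the path may meet other skew levels, and the $\sqrt t$ asymptotics come from outside the paper. You stop at the exit time $\tau$ of $(z-\varepsilon,z+\varepsilon)$, so only the level $z$ is touched. Every estimate then follows from Lemma~\ref{lem:ito} applied to $g$, $|x-z|$ and $(x-z)^2$, and the argument is self-contained.

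The price is that $e^{-rs}$ is not bounded below on $[0,\tau]$. Your claim that the local-time term is at least a constant times $\E_z[\ell^z_\tau]$ therefore needs a justification you have not given. The simplest fix is to apply the discounted formula directly to $|x-z|$, for which $\Delta_z=2$:
\[
\E_z\!\int_0^\tau e^{-rs}\,d\ell^z_s=\varepsilon\,\E_z\bigl[e^{-r\tau}\bigr]-\E_z\!\int_0^\tau e^{-rs}\bigl(\mu S_s\,\mathrm{sgn}(S_s-z)-r|S_s-z|\bigr)\,ds\ \ge\ \varepsilon-C\,\E_z[\tau].
\]
Combined with $\E_z[\tau]\le C'\varepsilon^2$ from $(x-z)^2$, this makes the gain at least $\beta_z\varepsilon-C''\varepsilon^2>0$ for small $\varepsilon$.

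Your analytic alternative is the same computation, since $h(z)=\E_z[e^{-r\tau}S_\tau]$, and its leading term $\beta_z\varepsilon$ matches this one.
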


\begin{proof}
(i) is immediate from Hypothesis~\ref{hyp:exit} and $v\ge g$.
(ii) and (iii) follow by subtracting \eqref{eq:gfacts} from $(\Lop-r)v=0$
and $\Delta_z(v)=0$. (Part (ii) is \M[(28)], read as a statement about
$W$ rather than as a step in a proof.) For (iv), let $z\in\Pplus$. The
short-time expansion \M[(21)] gives
$\E_z[e^{-rt}g(S_t)]-g(z)=\beta_z\E_z[\ell^z_t]+O(t)$, with
$\E_z[\ell^z_t]\asymp\sqrt t$. This is strictly positive for small $t$, so
$v(z)>g(z)$.
\end{proof}

Parts (ii) and (iii) contain the whole mechanism. Inside the exit
continuation region the premium $W$ is strictly $r$-\emph{sub}harmonic:
$v$ is a discounted martingale while the price is a strict discounted
supermartingale, so the premium grows in expectation and delaying purchase
pays. The only exceptions are skew points with $\beta_z>0$. There, (iii)
shows that $W$ has a concave corner of exactly the kind that makes stopping
locally optimal: the $O(\sqrt t)$ local-time loss beats the $O(t)$ drift
gain.

Suppose $\Pplus\ne\emptyset$ and enumerate it as $p_1<\dots<p_k$. Put
\[
y_j:=F(p_j),\qquad h_j:=W(p_j)/\varphi_r(p_j)>0,
\]
where $h_j>0$ by Lemma~\ref{lem:Wlocal}(i),(iv). Let $\kappa$ be the
function on $[0,\infty)$ equal to $h_j$ at $y_j$ and to $0$ elsewhere, and
let $K$ be its least concave majorant on $[0,\infty)$. Define the \emph{hull
set}
\[
\Gamma:=\{p_j:\ K(y_j)=h_j\}.
\]

\begin{lemma}[the hull]
\label{lem:hull}
$K(0)=0$; $K$ is non-decreasing, piecewise linear and strictly positive on
$(0,\infty)$; every breakpoint of $K$ is a point $y_j$ with $p_j\in\Gamma$;
and $K\equiv\max_jh_j$ on $[y_{j^\ast},\infty)$, where
$j^\ast:=\min\{j:\ h_j=\max_i h_i\}$. In particular $p_{j^\ast}\in\Gamma$, so
$\Gamma\neq\emptyset$.
\end{lemma}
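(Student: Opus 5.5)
The plan is to write down an explicit candidate for $K$ and check directly that it is the least concave majorant of $\kappa$. Recall that $0<y_1<\dots<y_k$ because $F$ is strictly increasing with $F(0^+)=0$, and that every $h_j>0$. Let $P:=\{(0,0)\}\cup\{(y_j,h_j)\}_{j\le k}$ and $M:=\max_jh_j$. Define $\widehat K$ on $[0,y_{j^\ast}]$ as the upper concave hull of the finite set $P\cap\bigl([0,y_{j^\ast}]\times\mathbb R\bigr)$, that is, the upper boundary of its convex hull, and set $\widehat K\equiv M$ on $[y_{j^\ast},\infty)$. The proof then reduces to showing $K=\widehat K$; every claim in the lemma can be read off from this formula.

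The first step is to show that $\widehat K$ is a concave majorant of $\kappa$. On $[0,y_{j^\ast}]$ the upper hull of finitely many points is piecewise linear and concave, and all its vertices are points of $P$. Its last vertex is $(y_{j^\ast},M)$, and $M$ is the largest height in $P$, so the final slope is at least $0$. Continuing flat with slope $0$ therefore keeps the slopes non-increasing, and $\widehat K$ is concave on $[0,\infty)$. The same fact gives monotonicity: the slopes decrease to a last value $\ge0$, so they are all $\ge0$ and $\widehat K$ is non-decreasing. Domination of $\kappa$ is checked in three places. At points of $P$ with $y_j\le y_{j^\ast}$ it holds by construction of the hull. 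At $y_j>y_{j^\ast}$ we have $\kappa(y_j)=h_j\le M=\widehat K(y_j)$. Everywhere else $\kappa=0$, and $\widehat K\ge0$ because it is non-decreasing with $\widehat K(0)=0$.

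The second step is minimality: any concave majorant $C$ of $\kappa$ satisfies $C\ge\widehat K$. On $[0,y_{j^\ast}]$, $\widehat K$ is the pointwise maximum of convex combinations of points of $P$, and $C$ lies above each of these points. Concavity of $C$ then puts $C$ above every such combination, so $C\ge\widehat K$ there. On $[y_{j^\ast},\infty)$ the argument needs one extra observation. A concave function on $[0,\infty)$ that is bounded below by $0$ cannot have a negative slope anywhere, since a negative slope would persist and drive $C$ to $-\infty$; note $C\ge\kappa\ge0$. Hence $C$ is non-decreasing, and $C(y)\ge C(y_{j^\ast})\ge M$ for $y\ge y_{j^\ast}$. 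This is the one genuinely non-finite point of the argument, and I expect it to be the main thing to get right. Together the two steps give $K=\widehat K$.

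The remaining conclusions follow from the formula.
\begin{itemize}
\item $K(0)=0$, because $0\le K(0)\le\widehat K(0)=0$, the origin being a vertex of the hull since it is the leftmost point of $P$.
\item $K$ is piecewise linear and non-decreasing by the first step.
\item $K$ is strictly positive on $(0,\infty)$. It is concave with $K(0)=0$ and $K(y_{j^\ast})=M>0$, so $K>0$ on $(0,y_{j^\ast}]$, and $K\equiv M>0$ to the right of $y_{j^\ast}$.
\item Every breakpoint is a hull vertex $(y_j,h_j)$ with $0<y_j\le y_{j^\ast}$. This includes $y_{j^\ast}$ itself, where the flat piece starts. At such a point $K(y_j)=h_j$, so $p_j\in\Gamma$. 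The origin is an endpoint of the domain, not an interior breakpoint.
\item Finally $K(y_{j^\ast})=M=h_{j^\ast}$, so $p_{j^\ast}\in\Gamma$ and $\Gamma\neq\emptyset$.
\end{itemize}

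For completeness, the use of the minimal index $j^\ast$ is harmless. If several $h_j$ equal $M$, the flat piece starts at the leftmost of them, which is exactly what the statement asserts.
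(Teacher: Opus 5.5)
Your proposal is correct and follows essentially the same route as the paper. You use the same explicit candidate (the upper hull of the origin and the points up to $y_{j^\ast}$, continued flat), check concavity and majorisation directly, and get minimality from Jensen on $[0,y_{j^\ast}]$ together with the fact that a concave function bounded below on a half-line is non-decreasing. The one difference is that you quote the standard fact that the upper hull of finitely many points is piecewise linear with vertices among those points, while the paper proves it via an extreme-point argument on the weights and the local constancy of a concave function at an interior minimum.
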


\begin{proof}
Write $(y_0,h_0):=(0,0)$ and define
\[
\tilde K(y):=
\begin{cases}
\max\Bigl\{\sum_{i=0}^{j^\ast}\lambda_ih_i:\ \lambda_i\ge0,\
  \sum_{i=0}^{j^\ast}\lambda_i=1,\ \sum_{i=0}^{j^\ast}\lambda_iy_i=y\Bigr\},
  & 0\le y\le y_{j^\ast},\\[4pt]
h_{j^\ast}, & y>y_{j^\ast}.
\end{cases}
\]
On $[0,y_{j^\ast}]$ this is the upper boundary of the convex hull of
$(y_0,h_0),\dots,(y_{j^\ast},h_{j^\ast})$. For each such $y$ the set $P_y$ of
admissible weights is a compact polytope, non-empty because $y$ is a convex
combination of $y_0=0$ and $y_{j^\ast}$, so the maximum is attained. It is
concave: if $y=\theta y'+(1-\theta)y''$, mixing optimal weights for $y'$ and
$y''$ is admissible for $y$, so
$\tilde K(y)\ge\theta\tilde K(y')+(1-\theta)\tilde K(y'')$. Since every
$h_i\le h_{j^\ast}=\tilde K(y_{j^\ast})$, the concave function $\tilde K$
attains its maximum over $[0,y_{j^\ast}]$ at the right endpoint and is
therefore non-decreasing there. Gluing on the constant $h_{j^\ast}$ keeps it
concave and non-decreasing on $[0,\infty)$.

The objective $\sum_i\lambda_ih_i$ is linear and $P_y$ is cut out by two
equality constraints and the constraints $\lambda_i\ge0$, so the maximum is
attained at an extreme point of $P_y$, which has at most two non-zero
weights. Hence $\tilde K(y)$ is the largest value at $y$ of the chords joining
pairs $(y_i,h_i)$, $(y_k,h_k)$ with $y_i\le y\le y_k$, the case $i=k$
included. To see that $\tilde K$ is piecewise linear with breakpoints only at
points $y_j$ where $\tilde K(y_j)=h_j$, let $y\in(0,y_{j^\ast})$ be either
not a data point or a data point $y_j$ with $\tilde K(y_j)>h_j$. Then an
optimal chord at $y$ joins points $y_i<y<y_k$, and it is admissible at every
$y'\in[y_i,y_k]$. Thus $g:=\tilde K-\text{chord}$ is concave and
non-negative on $[y_i,y_k]$ and vanishes at the interior point $y$. A
concave function attaining its minimum at an interior point is constant near
it, since $g(y)\ge\tfrac12\bigl(g(y-\delta)+g(y+\delta)\bigr)\ge g(y)$ forces
equality for small $\delta$. So $\tilde K$ is affine near $y$, and every
breakpoint is a point $y_j$ with $\tilde K(y_j)=h_j$. It majorises $\kappa$: take
$\lambda=\delta^{(j)}$ at $y_j$ for $j\le j^\ast$, where $\delta^{(j)}$
denotes the weights with $\delta^{(j)}_j=1$ and $\delta^{(j)}_i=0$ for $i\ne j$; use $h_j\le h_{j^\ast}$ for
$j>j^\ast$, and note that $\tilde K\ge0$ because every $h_i\ge0$. Finally
$\tilde K(0)=0$, since $y_i>0$ for $i\ge1$ forces $\lambda=\delta^{(0)}$ at $y=0$.

Now let $K'$ be any concave majorant of $\kappa$ on $[0,\infty)$. For admissible
weights at $y\le y_{j^\ast}$, Jensen's inequality gives
\[
K'(y)=K'\Bigl(\sum_i\lambda_iy_i\Bigr)\ge\sum_i\lambda_iK'(y_i)
\ge\sum_i\lambda_ih_i ,
\]
and taking the maximum over the weights gives $K'\ge\tilde K$ on
$[0,y_{j^\ast}]$. Since $K'\ge\kappa\ge0$ and a concave function on a
half-line that is bounded below is non-decreasing, for $y>y_{j^\ast}$ we
have $K'(y)\ge K'(y_{j^\ast})\ge h_{j^\ast}=\tilde K(y)$. Hence $K=\tilde K$,
and every claim is read off $\tilde K$. Positivity on $(0,\infty)$ holds
because $K$ is concave, $K(0)=0$ and $K(y_{j^\ast})>0$.
\end{proof}

\begin{theorem}[entry duality]
\label{thm:duality}
Assume Hypothesis~\ref{hyp:exit}.
\begin{enumerate}
\item[(a)] If $\Pplus=\emptyset$ then $W\equiv0$ and $u\equiv0$: entry is
  never strictly profitable.
\item[(b)] If $\Pplus\neq\emptyset$ then
  \begin{equation}
  u=\varphi_r\cdot K\!\circ\! F,
  \label{eq:uhull}
  \end{equation}
  and $\tau_\Gamma:=\inf\{t\ge0:S_t\in\Gamma\}$ is optimal. Moreover
  $u>W$ on $(0,\infty)\setminus\Gamma$, so the entry region
  $\{u=W\}$ is exactly $\Gamma$. In particular
  \[
  \emptyset\ne\{u=W\}\subset\Pplus .
  \]
\end{enumerate}
\end{theorem}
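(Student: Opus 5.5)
The plan is a standard verification argument in the $F$-coordinates, with the key input being Lemma~\ref{lem:Wlocal}. For part (a) nothing is needed beyond that lemma: if $\Pplus=\emptyset$, I will show $\Cex=\emptyset$. Suppose not, and let $J=(a,b)$ be a component of $\Cex$. Then $W>0$ on $J$, $W(a)=W(b)=0$, and $W$ is continuous, so $W$ attains a positive maximum at some $x_0\in J$. If $x_0\notin Z$, Lemma~\ref{lem:Wlocal}(ii) gives $\tfrac12\sigma^2x_0^2W''(x_0)=(r-\mu)x_0+rW(x_0)>0$, because $W'(x_0)=0$ there; this contradicts maximality. If $x_0\in Z$, then $\beta_{x_0}<0$, and maximality forces $W'(x_0^+)\le0\le W'(x_0^-)$. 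Hence $\Delta_{x_0}(W)\le0$, whereas (iii) gives $\Delta_{x_0}(W)=-2\beta_{x_0}>0$, again a contradiction. So $W\equiv0$, and then $u\equiv0$ trivially.

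For (b), set $\hat u:=\varphi_r\cdot K\circ F$. The first step is to show $\hat u\ge W$ everywhere. Write $\hat W:=W/\varphi_r$ as a function of $y=F(x)$. On each component of $\Cex$ minus $\Pplus$, $\hat W$ is strictly \emph{convex} in $y$. The reason is that the generator identity $(\Lop-r)f=\tfrac12\sigma^2x^2\varphi_r\,(f/\varphi_r)''_{FF}(F')^2$ (the usual Dynkin/Wronskian relation) together with (ii) gives convexity between skew points. At skew points with $\beta_z<0$, (iii) gives a convex kink: $\Delta_z(W)>0$ and $\Delta_z(\varphi_r)=0$. In $F$-scale the transmission condition becomes equality of the one-sided $F$-derivatives, because $\psi_r,\varphi_r$ satisfy it; this needs a short check. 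Consequently, on each such interval $\hat W$ lies below the chord joining its endpoint values. Those endpoints are either boundary points of $\Cex$, where $\hat W=0$, or points $y_j$, where $\hat W=h_j$. Since $K\ge\kappa$ at these endpoints and $K$ is concave, $K$ dominates the chord, so $K\circ F\ge\hat W$. Outside $\Cex$ we have $W=0\le\hat u$. The strict inequality $u>W$ off $\Gamma$ comes out of the same argument. Strict convexity gives $\hat W<\text{chord}\le K$ strictly inside intervals. At boundary points of $\Cex$ we have $W=0<\hat u$, since $K>0$ on $(0,\infty)$ by Lemma~\ref{lem:hull}. At $p_j\notin\Gamma$ we have $K(y_j)>h_j$ by definition.

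Next, $\hat u$ is $r$-excessive. It is $\varphi_r$ times a concave, non-decreasing, piecewise linear function of $F$. Off the breakpoints it is of the form $A\varphi_r+B\psi_r$, hence $r$-harmonic, including across skew levels, since both fundamental solutions satisfy the transmission condition. At each breakpoint $F$-concavity gives $\Delta(\hat u)\le0$ (with the appropriate $\Delta$ at non-skew breakpoints, namely a concave kink in the ordinary sense). It is nonnegative. By Lemma~\ref{lem:ito}, $e^{-rt}\hat u(S_t)$ is a nonnegative local supermartingale, hence a supermartingale. Optional sampling and $\hat u\ge W\ge0$ then give $\E_x[e^{-r\tau}W(S_\tau)]\le\hat u(x)$ for every $\tau$, with $\tau=\infty$ contributing $0$. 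So $u\le\hat u$.

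For the reverse inequality, fix $x$ and let $y=F(x)$. If $y$ lies between consecutive hull points $y_i<y<y_k$ of $\Gamma\cup\{0\}$, then $K$ is affine there. Lemma~\ref{lem:hitting} and the two-sided version via $\psi_r,\varphi_r$ (the standard exit formula, proved by the same bounded-martingale argument) give $\E_x[e^{-r\tau_\Gamma}W(S_{\tau_\Gamma})]=\varphi_r(x)K(F(x))$. The case $y<y_{\min\Gamma}$ uses $(0,0)$, i.e.\ $\psi_r$ alone. The case $y>y_{j^\ast}$ uses $\varphi_r(x)/\varphi_r(p_{j^\ast})\cdot W(p_{j^\ast})$, which matches the flat piece; here $\Gamma$'s rightmost relevant point is $p_{j^\ast}$, since points of $\Gamma$ beyond it have $h_j=h_{j^\ast}$ and the computation is the same. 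Hence $u=\hat u$, $\tau_\Gamma$ is optimal, and $\{u=W\}=\Gamma\subset\Pplus$. I expect the main obstacle to be the convexity claim at the $\beta_z<0$ skew points in $F$-coordinates. That step requires checking that the transmission operator $\Delta_z$ transforms into the jump of $d(f/\varphi_r)/dF$ up to a positive factor. I would verify this directly by writing $f=\varphi_r\hat f$ and using $\Delta_z(\varphi_r)=\Delta_z(\psi_r)=0$.
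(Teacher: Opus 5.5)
Your proof is correct. Its skeleton is the paper's: $r$-excessivity of $\varphi_r\cdot K\circ F$, domination of $W$ with equality exactly on $\Gamma$, and attainment by $\tau_\Gamma$. Three ingredients are argued differently.

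\emph{Part (a).} The paper notes that $g$ is $r$-superharmonic when every $\beta_z<0$ and quotes \M[Prop.~8.1] for $v=g$. Your maximum principle on a component of $\Cex$ uses only Lemma~\ref{lem:Wlocal}(ii),(iii) and Hypothesis~\ref{hyp:exit}, so it is self-contained. It also shows more generally that every component of $\Cex$ contains a support level.

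\emph{Domination.} The paper stays in the $x$-variable and runs a minimum principle for $D=\hat U-W$ on each component of $\Cex$, with separate cases for a minimiser off $Z$ and at a downward skew level. You change to the $F$-scale, using two facts:
\begin{itemize}
\item the identity $(\Lop-r)f=\tfrac12\sigma^2x^2\varphi_r(F')^2(f/\varphi_r)_{FF}$;
\item the transformation $\Delta_z(f)=\frac{(1+\beta_z)\mathcal W(z^+)}{\varphi_r(z)}\,\bigl[(f/\varphi_r)_F(z^+)-(f/\varphi_r)_F(z^-)\bigr]$.
\end{itemize}
The second holds because $(1+\beta_z)\mathcal W(z^+)=(1-\beta_z)\mathcal W(z^-)$, and it is the same computation as the paper's Step~2. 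Together they make $W/\varphi_r$ strictly convex in $F$ on each component of $\Cex\setminus\Pplus$, and a chord comparison with the concave $K$ finishes.

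The two arguments are equivalent under this change of variables. Yours costs the extra identity but shows that $K$ is the least concave majorant of $(W/\varphi_r)\circ F^{-1}$ itself, not merely of the point data $\kappa$. That is the Dayanik--Karatzas picture behind Remark~\ref{rem:notneeded}, and it explains why only the points $y_j$ matter. The paper's version needs no coordinate change in Step~3.

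\emph{Attainment.} The paper observes that $\hat U$ is bounded and $r$-harmonic off $\Gamma$, so the stopped process is a bounded martingale. Your explicit computation via one- and two-sided exit formulas is also correct. It does, however, need the two-sided formula, which the paper does not state but which is proved like Lemma~\ref{lem:hitting}. It also needs the bookkeeping for points of $\Gamma$ to the right of $p_{j^\ast}$, which the martingale argument avoids.

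Two small points:
\begin{itemize}
\item You state strictness only at $\partial\Cex$; it should also be noted on all of $(0,\infty)\setminus\overline{\Cex}$, where it follows in the same way from $K>0$.
\item Your hedge about breakpoints of $K$ away from skew levels is unnecessary. By Lemma~\ref{lem:hull} every breakpoint of $K$ lies in $F(\Gamma)\subset F(Z)$.
\end{itemize}
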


\begin{proof}
(a) If $\Pplus=\emptyset$ then every $\beta_z<0$, so $g$ is
$r$-superharmonic by \eqref{eq:gfacts}, hence $v=g$ \M[Prop.~8.1], $W\equiv0$
and $u\equiv0$.

(b) Put $\hat U:=\varphi_r\cdot K\circ F$.

\emph{Step 1: $\hat U$ is $r$-harmonic off $\Gamma$.} By Lemma~\ref{lem:hull},
$K$ is affine on each component of $[0,\infty)\setminus F(\Gamma)$, say
$K(y)=A+By$. There $\hat U=A\varphi_r+B\psi_r$, which is $r$-harmonic in
the sense of Section~\ref{sec:setting}, including across any skew points in the
component. On the first component $A=0$, since $K(0)=0$, so
$\hat U=B\psi_r$; on the last, $B=0$ and $\hat U=A\varphi_r$. In
particular $\hat U$ is bounded.

\emph{Step 2: $\Delta_p(\hat U)\le0$ at $p\in\Gamma$.} Let $K$ have slopes
$B_\mp$ on either side of $y=F(p)$, and write $\delta B:=B_+-B_-\le0$ by
concavity. The pieces of $\hat U$ on the two sides differ by the
$r$-harmonic function $\delta A\,\varphi_r+\delta B\,\psi_r$, and
continuity of $K$ at $F(p)$ (Lemma~\ref{lem:hull}) gives $\delta A+\delta B\,F(p)=0$. Extend the left piece
across $p$ as an $r$-harmonic function; its $\Delta_p$ vanishes, so
\[
\Delta_p(\hat U)=(1+\beta_p)\,\delta B\,\bigl(\psi_r'(p^+)-F(p)\varphi_r'(p^+)\bigr)
=(1+\beta_p)\,\delta B\,\frac{\mathcal W(p^+)}{\varphi_r(p)}\ \le\ 0
\]
by \eqref{eq:wronskian}. With Step 1, $\hat U$ is non-negative and
$r$-superharmonic on $(0,\infty)$, hence $r$-excessive.

\emph{Step 3: $\hat U>W$ off $\Gamma$.} Off $\Cex$, $W=0<\hat U$ by
Lemma~\ref{lem:hull}. Let $J$ be a component of $\Cex$ and $D:=\hat U-W$
on $\bar J$. Then:
\begin{itemize}
\item at an endpoint $e$ of $J$, $e\notin\Cex$ because $\Cex$ is open, so
  $W(e)=0$, while $\hat U(e)=\varphi_r(e)\,K(F(e))>0$ by Lemma~\ref{lem:hull}
  since $F(e)>0$; hence $D(e)>0$;
\item at $p_j\in\Pplus\cap J$, $\hat U(p_j)=\varphi_r(p_j)K(y_j)$ since
  $y_j=F(p_j)$, and $W(p_j)=\varphi_r(p_j)h_j$ by the definition of $h_j$, so
  $D(p_j)=\varphi_r(p_j)\bigl(K(y_j)-h_j\bigr)\ge0$ because $K$ majorises
  $\kappa$ and $\kappa(y_j)=h_j$; as $\varphi_r(p_j)>0$, equality holds if
  and only if $K(y_j)=h_j$, that is, $p_j\in\Gamma$;
\item on $J\setminus Z$, $\hat U$ is $r$-harmonic by Step 1 (as $\Gamma\subset Z$),
  so $(\Lop-r)D=-(r-\mu)x<0$ by Lemma~\ref{lem:Wlocal}(ii);
\item at $z\in(Z\setminus\Pplus)\cap J$: since $z\notin\Gamma\subset\Pplus$,
  $K$ is affine on a neighbourhood of $F(z)$ by Lemma~\ref{lem:hull}, so near
  $z$ we have $\hat U=A\varphi_r+B\psi_r$ with the same $A,B$ on both sides of
  $z$ (Step 1); as $\varphi_r$ and $\psi_r$ satisfy the transmission condition
  at every skew point and $\Delta_z$ is linear, $\Delta_z(\hat U)=0$, and so
  $\Delta_z(D)=2\beta_z<0$ by Lemma~\ref{lem:Wlocal}(iii).
\end{itemize}
Suppose $D(x_0)\le0$ for some $x_0\in\bar J\setminus\Gamma$, and let $x_1$
minimise $D$ over $\bar J$, choosing $x_1=x_0$ if $\min D=0$. Then
$D(x_1)\le0$ and $x_1\notin\Gamma$, so by the first two bullets $x_1$ is
either a point of $J\setminus Z$ or a point of $Z\setminus\Pplus$. In the
first case $D'(x_1)=0$ and $D''(x_1)\ge0$, so
$(\Lop-r)D(x_1)\ge-rD(x_1)\ge0$, contradicting the third bullet. In the
second case a minimum requires $D'(x_1^-)\le0\le D'(x_1^+)$, whence
$\Delta_{x_1}(D)\ge0$, contradicting the fourth bullet. Hence $D>0$ on
$\bar J\setminus\Gamma$. On $\Gamma$, $D=0$ by the second bullet.

\emph{Step 4: $\hat U$ is attained by $\tau_\Gamma$.} By Step 1 and
boundedness, $e^{-r(t\wedge\tau_\Gamma)}\hat U(S_{t\wedge\tau_\Gamma})$
is a bounded martingale. On $\{\tau_\Gamma<\infty\}$ it converges to
$e^{-r\tau_\Gamma}W(S_{\tau_\Gamma})$, since $\hat U=W$ on $\Gamma$, and on
$\{\tau_\Gamma=\infty\}$ it converges to $0$. So
$\hat U(x)=\E_x[e^{-r\tau_\Gamma}W(S_{\tau_\Gamma})]\le u(x)$.

\emph{Step 5: $u\le\hat U$.} For any stopping time $\tau$, Steps 2 and 3 and
optional sampling for the non-negative supermartingale
$e^{-rt}\hat U(S_t)$ give
$\E_x[e^{-r\tau}W(S_\tau)\mathbf 1_{\{\tau<\infty\}}]\le
\E_x[e^{-r\tau}\hat U(S_\tau)\mathbf 1_{\{\tau<\infty\}}]\le\hat U(x)$.

Hence $u=\hat U$, $\tau_\Gamma$ is optimal, and $\{u=W\}=\Gamma$ by Step 3.
\end{proof}

\begin{remark}[what was not needed]
\label{rem:notneeded}
The proof uses no information about the shape of $W$ beyond
Lemma~\ref{lem:Wlocal}. It needs no smooth fit for $u$, and no smooth-fit
analysis of the non-smooth reward is required: whether $W$ is $C^1$ or only
$C^0$ at the edges of $\Cex$ is irrelevant, because $W=0<u$ there. The
concavity of $K$ is the Dayanik--Karatzas characterisation
\citep{DayanikKaratzas03}. What is specific to this problem is that the
majorant touches only at finitely many points, which are known in advance.
\end{remark}

\begin{remark}[robustness to transaction costs]
\label{rem:costs}
Replace the reward by $W_c:=(1-\lambda')v-(1+\lambda)g-c$, with
proportional costs $\lambda,\lambda'\in[0,1)$ and a fixed cost $c\ge0$, and
let $u_c(x):=\sup_\tau\E_x[e^{-r\tau}W_c(S_\tau)]$ with the same convention
on $\{\tau=\infty\}$, so that $u_c\ge0$. The exit value with proceeds
$(1-\lambda')x$ is $(1-\lambda')v$ by linearity, with the same continuation
region $\Cex$. We claim that
\[
\{u_c=W_c\}\subset\Pplus .
\]
The proof is local; the global argument of Theorem~\ref{thm:duality} is not
available, because Lemma~\ref{lem:hull} needs every $h_j>0$, whereas $W_c$
may be negative even at support points. It suffices to show $u_c(x)>W_c(x)$
for every $x\notin\Pplus$, and there are three cases.
\begin{itemize}
\item $x\in\Cex\setminus Z$. There $(\Lop-r)v=0$, $(\Lop-r)g=(\mu-r)x$ and
  $(\Lop-r)c=-rc$, so $(\Lop-r)W_c=(1+\lambda)(r-\mu)x+rc>0$. Let $I\ni x$ be
  a compact interval in $\Cex\setminus Z$ with $x$ in its interior, and let
  $\tau$ be the exit time of $I$, which is a.s.\ finite and strictly positive.
  By Dynkin's formula,
  $\E_x[e^{-r\tau}W_c(S_\tau)]=W_c(x)+\E_x\int_0^\tau e^{-rs}(\Lop-r)W_c(S_s)\,ds
  >W_c(x)$, so $u_c(x)>W_c(x)$.
\item $x=z\in Z\cap\Cex$ with $\beta_z<0$. Take $I$ as before but
  containing no skew level other than $z$. By Lemma~\ref{lem:ito} and optional
  sampling ($W_c$ is bounded on $I$ and $\E_z\tau<\infty$),
  \[
  \E_z\bigl[e^{-r\tau}W_c(S_\tau)\bigr]-W_c(z)
  =\E_z\!\int_0^\tau e^{-rs}(\Lop-r)W_c(S_s)\,ds
  +\tfrac12\Delta_z(W_c)\,\E_z\!\int_0^\tau e^{-rs}\,d\ell^z_s .
  \]
  Both terms are positive: the first as in the previous case, since $S$
  spends zero time at $z$; the second because $\Delta_z(v)=0$ by
  Hypothesis~\ref{hyp:exit} and $\Delta_z(g)=2\beta_z$, so that
  $\Delta_z(W_c)=-2(1+\lambda)\beta_z>0$, while
  $\E_z\int_0^\tau e^{-rs}\,d\ell^z_s>0$ because a path started at $z$
  accumulates local time there immediately. Hence again $u_c(z)>W_c(z)$.
\item $x\notin\Cex$. Then $v(x)=g(x)$, so $W_c(x)=-(\lambda+\lambda')x-c$.
  If at least one of $\lambda,\lambda',c$ is positive, $W_c(x)<0\le u_c(x)$. If there are no costs,
  $W_c=W$ and $W(x)=0<u(x)$ as in Theorem~\ref{thm:duality}.
\end{itemize}
What does not carry over is non-emptiness and the hull formula: only
supports with $W_c(p)>0$ can enter, and the entry region may be empty, in
which case never buying is optimal. Whether $u_c$ is given by the hull over
$\{p\in\Pplus:W_c(p)>0\}$, with $u_c\equiv0$ when $W_c\le0$ on $\Pplus$, we
leave open: Step~3 of the proof of Theorem~\ref{thm:duality} uses $W\ge0$
and would have to be redone for a reward that can be negative.
\end{remark}

\section{Support and resistance: buy at support}
\label{sec:main}

Throughout this section $n=2$, $z_1=L<z_2=H$, $\beta_L\in(0,1)$ and
$\beta_H\in(-1,0)$, as in \M[(1)]. Let $(a,b)$ be the exit continuation band
in whichever regime obtains \M[\S4], and write $G:=G_L$ for the function
\eqref{eq:Gz} at $z=L$. By \M[Thm.~4.1], $G$ is strictly convex with unique
minimum $G(L)=L$.

\begin{corollary}[buy exactly at support]
\label{cor:main}
In each of regimes A, B and C the entry region is $\{L\}$, the optimal entry
time is $\tau_L$, and
\begin{equation}
u(x)=\bigl(G(a)-L\bigr)\,\E_x\bigl[e^{-r\tau_L}\bigr]
=\bigl(G(a)-L\bigr)\times
\begin{cases}
(x/L)^{\alpha_2}, & x\le L,\\[2pt]
\varphi_r(x)/\varphi_r(L), & x\ge L,
\end{cases}
\label{eq:umain}
\end{equation}
where $\varphi_r=x^{\alpha_1}$ on $[H,\infty)$ and $\varphi_r$ has
coefficient pair $T_{\beta_H}(H)(1,0)^{\!\top}$ on $[L,H]$. Moreover $u>W$
on $(0,\infty)\setminus\{L\}$.
\end{corollary}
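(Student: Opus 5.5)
This is a direct specialisation of Theorem~\ref{thm:duality}(b), so the plan is to identify $\Pplus$ and the hull set $\Gamma$, and then read $u$ off \eqref{eq:uhull}.

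\emph{The hull set.} With $\beta_L>0>\beta_H$ we have $\Pplus=\{L\}$, so $k=1$ with $p_1=L$, $y_1=F(L)$ and $h_1=W(L)/\varphi_r(L)>0$; positivity comes from Lemma~\ref{lem:Wlocal}(i),(iv). By Lemma~\ref{lem:hull}, with $j^\ast=1$, the concave majorant is
\[
K(y)=h_1\min\{y/y_1,1\}.
\]
This is the chord from $(0,0)$ to $(y_1,h_1)$, continued flat. Hence $\Gamma=\{L\}$, and Theorem~\ref{thm:duality}(b) gives the entry region $\{L\}$, the optimality of $\tau_\Gamma=\tau_L$, and $u>W$ off $\{L\}$. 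Since this holds in every regime, no case analysis over A, B and C is needed at this stage.

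\emph{The formula for $u$.} For $x\le L$, \eqref{eq:uhull} gives
\[
u(x)=\varphi_r(x)\,h_1F(x)/F(L)=W(L)\,\psi_r(x)/\psi_r(L),
\]
and for $x\ge L$ it gives $u(x)=W(L)\,\varphi_r(x)/\varphi_r(L)$. By Lemma~\ref{lem:hitting}, both cases equal $W(L)\,\E_x[e^{-r\tau_L}]$. On $(0,L]$ we have $\psi_r=x^{\alpha_2}$, since $L=z_1$, so the first ratio is $(x/L)^{\alpha_2}$. The description of $\varphi_r$ on $[L,H]$ is exactly its definition: it equals $x^{\alpha_1}$ above $z_n=H$ and is continued downward by $T_{\beta_H}(H)$ applied to $(1,0)^{\!\top}$.

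\emph{Identifying $W(L)$.} It remains to show that $W(L)=G(a)-L$, and I expect this to be the only real obstacle. The point is that in every regime the exit band $(a,b)$ has $L$ as an interior point with $a<L$ and no skew level in $[a,L)$. On that stretch $v$ is the $r$-harmonic piece with smooth fit at the lower boundary $a$, that is, $v(a)=a$ and $v'(a)=1$. Lemma~\ref{lem:smoothfit} then gives coefficients $\vk(a)$, and evaluating at $L$ gives $v(L)=e_L\vk(a)=G_L(a)=G(a)$ by \eqref{eq:Gz}.

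I would check this regime by regime against \M[Thms.~4.1, 4.3, 5.2]. The key fact to confirm is that the lower boundary always satisfies $a<L$ with smooth fit, while the possible failure of smooth fit in regime~C occurs only at the upper boundary, the skew point $H$. If in some regime the lower boundary were instead determined differently, I would recompute $v(L)$ from the coefficient pair given there. Granting the smooth fit at $a$, strict convexity of $G$ with minimum $G(L)=L$ gives $G(a)>L$, which is consistent with $h_1>0$ and completes the proof.
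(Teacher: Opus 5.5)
Your proposal is correct and follows the paper's proof essentially step for step. You use $\Pplus=\{L\}$ to get $\Gamma=\{L\}$, then \eqref{eq:uhull} with Lemma~\ref{lem:hitting} to get $u=W(L)\,\E_x[e^{-r\tau_L}]$, and smooth fit at the lower boundary $a<L$ (Lemma~\ref{lem:smoothfit}) to get $W(L)=G(a)-L$; like the paper, you take the facts that $a<L$ and that smooth fit holds at $a$ in all three regimes from the companion paper, which is necessary because Hypothesis~\ref{hyp:exit} alone only guarantees continuity of $v$ at the band edges.
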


\begin{proof}
Here $\Pplus=\{L\}$, so $k=1$, and a single point is on its own hull:
$\Gamma=\{L\}$. By \eqref{eq:uhull}, $u=h_1\varphi_r\cdot(F/y_1)$ on
$(0,L]$ and $u=h_1\varphi_r$ on $[L,\infty)$; that is,
$u=W(L)\psi_r/\psi_r(L)$ and $u=W(L)\varphi_r/\varphi_r(L)$, which by
Lemma~\ref{lem:hitting} is $W(L)\E_x[e^{-r\tau_L}]$. In every regime
$a<L$, and on $(a,L)$ the exit value has coefficient pair $\vk(a)$ by smooth
fit (Lemma~\ref{lem:smoothfit}). So $v(L)=e_L\vk(a)=G(a)$ and $W(L)=G(a)-L$. Finally
$\psi_r=x^{\alpha_2}$ on $(0,L]$ by definition.
\end{proof}

\begin{remark}[a corner, not a free boundary]
\label{rem:corner}
No boundary is optimised and there is no smooth fit. From \eqref{eq:umain},
$u'(L^-)=\alpha_2W(L)/L>0$ and $u'(L^+)=W(L)\varphi_r'(L^+)/\varphi_r(L)<0$,
so $\Delta_L(u)<0$ strictly: $u$ has a concave corner at $L$, and $L$ is an
isolated point of the stopping set. This is the entry counterpart of
\M[Thm.~4.3], where the exit boundary pins at the kink $H$. There pinning
depends on the criterion \M[(23)]; here it happens in every regime. The
reason is that the reward itself has the corner:
$\Delta_L(W)=-2\beta_L<0$ (Lemma~\ref{lem:Wlocal}(iii); cf.\ the concave
corner of $W$ at $L$ noted at the end of the proof of \M[Prop.~5.1]).
That smooth fit can fail at points where the scale or speed measure is
singular is known in general \citep{SalminenTa15}.
\end{remark}

\begin{remark}[stopping at a skew point]
\label{rem:AS}
For skew Brownian motion with an upward skew, \citet[Prop.~1]{AlvarezSalminen17}
show that the skew point lies in the continuation region whenever the
pay-off is non-decreasing (their standing assumption (g1)) and satisfies
$0\le g'(0^-)<g'(0^+)$ or $0<g'(0^-)\le g'(0^+)$. Corollary~\ref{cor:main}
stops exactly at an upward skew point, and the two are consistent: the entry
reward $W$ is not non-decreasing at $L$, but has a concave corner there, with
$W'(L^-)>0>W'(L^+)$ (end of the proof of \M[Prop.~5.1]). Their argument
gives a necessary condition for a skew point to be a stopping point
\citep[Rem.~1(2)]{AlvarezSalminen17}. Matching their scale function with
ours, their skewness parameter is $(1+\beta_z)/2$, and the condition reads
$(1+\beta_z)R'(z^+)\le(1-\beta_z)R'(z^-)$ for a reward $R$, that is,
$\Delta_z(R)\le0$; this is the condition that case~2 of
Remark~\ref{rem:costs} shows to be necessary. For the seller at $L$,
$\Delta_L(g)=2\beta_L>0$, so $L$ cannot be a selling point; for the buyer,
$\Delta_L(W)=-2\beta_L<0$, and $L$ is a buying point. The signs that keep the
seller waiting at a support are exactly those that make the buyer stop
there; this is the duality of Theorem~\ref{thm:duality} seen locally.
Isolated stopping points at a skew point are not new in themselves:
Presman and Yurinsky (2012), as reported in
\citet[Rem.~1(3)]{AlvarezSalminen17}, find one for skew geometric Brownian
motion with a downward skew and a piecewise linear pay-off. What is new here
is that every point of the entry region is of this kind, and that the reward
producing them is the exit premium itself.
\end{remark}

\begin{remark}[the round trip]
\label{rem:roundtrip}
Since $L\in(a,b)$, the investor who buys at $\tau_L$ is immediately inside
the exit band. The complete strategy is: wait for the price to hit $L$; buy;
sell at the first exit from $(a,b)$, which means a stop-loss at $a$ or a sale
at $b$, with $b=H$ in regime~C. From $x\le L$ its value is
$(G(a)-L)(x/L)^{\alpha_2}$, which depends on the exit problem only through
the stop-loss level $a$.
\end{remark}

\begin{remark}[dependence on $\beta_H$]
\label{rem:betaH}
The rule $\tau_L$ depends on no parameter except $L$. The value depends on
$\beta_H$ in two ways. First, through $a$, which varies with $\beta_H$ only
in regime~B: in regime~A, $a$ is the $L$-band endpoint, and in regime~C it
solves \M[(22)], in which $\beta_H$ does not appear. Second, through
$\varphi_r$ on $(L,\infty)$, which gives the discounted expectation
$\E_x[e^{-r\tau_L}]$ of the time for the price to fall back to $L$; it
reflects the price dynamics, not the buying rule or the selling rule. This is consistent with the fact that $v$
itself depends on $\beta_H$ only through $b$: on $(a,L)$ and $(L,H)$ its
coefficient pairs involve only $a$ and $\beta_L$, on $(H,b)$ the pair is
$\vk(b)$ by smooth fit, and in regime~B continuity at $H$, which does not
involve $\beta_H$, determines $a$ from $b$ \M[\S4.3]. Entry decisions
therefore carry no identifying information about $\beta_H$ beyond what the
exit band already carries.
\end{remark}

\section{Two supports: a three-way classification}
\label{sec:samesign}

Now let $\beta_L,\beta_H\in(0,1)$, the case of \M[\S7], so that $\Pplus=\{L,H\}$
and both points lie in $\Cex$. The exit region there consists either of two
disjoint bands, one around each level, or of one merged band.

\begin{corollary}[which support to buy at]
\label{cor:samesign}
Let $\rho:=W(L)/W(H)$ and
\[
\underline\rho:=\frac{\psi_r(L)}{\psi_r(H)}=\E_L\bigl[e^{-r\tau_H}\bigr]\in(0,1),
\qquad
\overline\rho:=\frac{\varphi_r(L)}{\varphi_r(H)}=\frac{1}{\E_H\bigl[e^{-r\tau_L}\bigr]}\in(1,\infty).
\]
Then
\[
\{u=W\}=
\begin{cases}
\{H\}, & \rho<\underline\rho,\\
\{L,H\}, & \underline\rho\le\rho\le\overline\rho,\\
\{L\}, & \rho>\overline\rho.
\end{cases}
\]
Equivalently, $L$ is an entry point if and only if
$W(L)\ge\E_L[e^{-r\tau_H}]\,W(H)$, and $H$ is an entry point if and only if
$W(H)\ge\E_H[e^{-r\tau_L}]\,W(L)$: one buys at a support unless the other
support is worth more after discounting for the time it takes to get there.
In the case $\{L,H\}$,
\[
u(x)=\begin{cases}
W(L)\,\psi_r(x)/\psi_r(L), & x\le L,\\
A\varphi_r(x)+B\psi_r(x)\ \text{with } u(L)=W(L),\ u(H)=W(H), & L\le x\le H,\\
W(H)\,\varphi_r(x)/\varphi_r(H), & x\ge H .
\end{cases}
\]
\end{corollary}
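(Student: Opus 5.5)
This is Theorem~\ref{thm:duality}(b) specialised to $k=2$, so the plan is to compute the hull set $\Gamma$ explicitly. By Lemma~\ref{lem:Wlocal}(iv) both $L$ and $H$ lie in $\Cex$, so $h_1=W(L)/\varphi_r(L)>0$ and $h_2=W(H)/\varphi_r(H)>0$. Set $y_1=F(L)<y_2=F(H)$. The function $K$ is the least concave majorant of the three points $(0,0)$, $(y_1,h_1)$, $(y_2,h_2)$, continued flat to the right. Since $\{u=W\}=\Gamma$, we only need to decide when each point lies on this hull.

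\emph{$H$ is on the hull.} The point $(y_2,h_2)$ is on the hull if and only if $h_2\ge h_1$. If $h_2<h_1$, Lemma~\ref{lem:hull} gives $K\equiv h_1$ on $[y_1,\infty)$, and then $K(y_2)>h_2$. If $h_2\ge h_1$, the chord from $(0,0)$ or from $(y_1,h_1)$ to $(y_2,h_2)$, extended flat beyond $y_2$, is a concave majorant that equals $h_2$ at $y_2$. The condition $h_2\ge h_1$ reads $W(H)/\varphi_r(H)\ge W(L)/\varphi_r(L)$, that is, $\rho\le\overline\rho$. By Lemma~\ref{lem:hitting}, this is $W(H)\ge\E_H[e^{-r\tau_L}]W(L)$.

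\emph{$L$ is on the hull.} The point $(y_1,h_1)$ is on the hull if and only if it lies on or above the chord from $(0,0)$ to $(y_2,h_2)$, or if $h_1\ge h_2$. The chord condition is $h_1\ge h_2y_1/y_2$. Because $F=\psi_r/\varphi_r$, this becomes $W(L)/\psi_r(L)\ge W(H)/\psi_r(H)$, that is, $\rho\ge\underline\rho$, or equivalently $W(L)\ge\E_L[e^{-r\tau_H}]W(H)$ by Lemma~\ref{lem:hitting}.

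The case $h_1\ge h_2$ is already covered by the chord condition, since $y_1/y_2<1$. For the converse, suppose the point lies strictly below the chord. The chord then majorises $\kappa$ on $[0,y_2]$; together with the flat extension (valid because $h_2>h_1$) it gives a concave majorant that is strictly above $h_1$ at $y_1$. Combining the two criteria gives the three-way classification. The two windows are consistent: $\underline\rho<1<\overline\rho$, because $\psi_r$ is increasing and $\varphi_r$ is decreasing, so $\rho<\underline\rho$ forces $\rho<\overline\rho$ and $\rho>\overline\rho$ forces $\rho>\underline\rho$. The hitting-time identities for $\underline\rho$ and $1/\overline\rho$ are exactly Lemma~\ref{lem:hitting} with $x=L$, $p=H$ and with $x=H$, $p=L$.

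\emph{The value when $\Gamma=\{L,H\}$.} Here $K$ is linear through $(0,0)$ on $[0,y_1]$, affine from $(y_1,h_1)$ to $(y_2,h_2)$, and flat at height $h_2$ on $[y_2,\infty)$. Formula~\eqref{eq:uhull} then gives the three pieces:
\begin{itemize}
\item on $(0,L]$, $u=(h_1/y_1)\psi_r=W(L)\psi_r/\psi_r(L)$;
\item on $[L,H]$, $u=A\varphi_r+B\psi_r$, with $A,B$ fixed by $u(L)=W(L)$ and $u(H)=W(H)$;
\item on $[H,\infty)$, $u=h_2\varphi_r=W(H)\varphi_r/\varphi_r(H)$.
\end{itemize}

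The main obstacle, though a mild one, is the boundary cases $\rho=\underline\rho$ and $\rho=\overline\rho$. In each, a point lies exactly on a hull edge and is therefore not a breakpoint. We still have $K(y_j)=h_j$, so the point belongs to $\Gamma$ by definition, which matches the closed window in the statement. The proof should state explicitly that membership in $\Gamma$ means touching the hull, not being a vertex of it.
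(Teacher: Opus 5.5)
Your route is the paper's: specialise Theorem~\ref{thm:duality}(b) to $k=2$, decide when each of $(y_1,h_1)$ and $(y_2,h_2)$ touches the hull, and translate via $h/y=W/\psi_r$, $h=W/\varphi_r$ and Lemma~\ref{lem:hitting}. The criterion for $H$, the translation into $\underline\rho,\overline\rho$, the boundary cases and the three-piece formula for $u$ are all correct.

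One inference, however, runs in the wrong direction. In the `converse' for $L$ you show that the chord from $(0,0)$ to $(y_2,h_2)$, continued flat, is a concave majorant of $\kappa$ lying strictly above $h_1$ at $y_1$. That gives only $K(y_1)\le h_2y_1/y_2$, and no upper bound on $K(y_1)$ can prove $K(y_1)>h_1$. The reason is that $K$ is the \emph{least} concave majorant: a large constant is also a concave majorant exceeding $h_1$ at $y_1$, whether or not $L\in\Gamma$. Your argument for $H$ does not have this defect. There the upper bound is used to prove $K(y_2)=h_2$, and for $h_2<h_1$ you take the exact value of $K$ from Lemma~\ref{lem:hull}.

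What is needed is a lower bound that holds for every concave majorant $K'$. Put $\theta:=y_1/y_2\in(0,1)$, so that $y_1=(1-\theta)\cdot 0+\theta y_2$. Concavity and $K'\ge\kappa$ (in particular $K'(0)\ge0$ and $K'(y_2)\ge h_2$) give
$K'(y_1)\ge(1-\theta)K'(0)+\theta K'(y_2)\ge\theta h_2>h_1$.
Hence $K(y_1)>h_1$ and $L\notin\Gamma$. Equivalently, the weights $\lambda_0=1-\theta$, $\lambda_1=0$, $\lambda_2=\theta$ are admissible at $y_1$ in the formula for $\tilde K$ in the proof of Lemma~\ref{lem:hull}. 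This is what the paper's phrase ``lies strictly below the chord'' uses implicitly, and it also shows that your chord-plus-flat function is $K$ itself.

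The sufficiency half for $L$, which you assert but do not argue, needs two short lines. If $h_1\ge h_2$, then $j^\ast=1$, so $L\in\Gamma$ by Lemma~\ref{lem:hull}. If $h_2>h_1$ and the chord condition holds, your polygonal majorant through $(0,0)$, $(y_1,h_1)$, $(y_2,h_2)$, continued flat, gives $K(y_1)\le h_1$. With these additions the proof is complete.
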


\begin{proof}
With two points $y_L<y_H$ in Theorem~\ref{thm:duality}, the hull through the
origin passes through $(y_L,h_L)$ if and only if $h_L/y_L\ge h_H/y_H$.
Otherwise $(y_L,h_L)$ lies strictly below the chord from the origin to
$(y_H,h_H)$. The hull passes through $(y_H,h_H)$ if and only if
$h_H\ge h_L$; otherwise $(y_H,h_H)$ lies below the flat part at height
$h_L$. Since $h/y=W/\psi_r$ and $h=W/\varphi_r$, these conditions are
$W(L)/\psi_r(L)\ge W(H)/\psi_r(H)$ and $W(H)/\varphi_r(H)\ge
W(L)/\varphi_r(L)$, which are the stated inequalities. The expectations come
from Lemma~\ref{lem:hitting}, and $\underline\rho<1<\overline\rho$ because
$\psi_r$ is increasing and $\varphi_r$ decreasing.
\end{proof}

Since the window $[\underline\rho,\overline\rho]$ contains $1$ in its
interior, buying at both supports is optimal whenever $W(L)=W(H)$. Moreover
$\rho$, $\underline\rho$ and $\overline\rho$ depend continuously on the
parameters, so the strict inequalities $\underline\rho<\rho<\overline\rho$
persist under small perturbations: buying at both supports is optimal on an
open set of parameters, not only in a knife-edge case.

\begin{remark}[closed forms for the premia]
\label{rem:premia}
The premia $W(L)$ and $W(H)$ entering Corollary~\ref{cor:samesign} are
explicit.
\begin{itemize}
\item \emph{Disjoint bands.} Each band is the single-skew band of its own
  level. By the scale invariance of $g(x)=x$ under geometric Brownian motion,
  the single-skew problem at $z$ is $z$ times the problem at $1$. Hence
  $W(z)=z\,w(\beta_z)$, where $w(\beta):=G_1(\bar a(\beta))-1$ is the
  premium at $1$ of the unit single-skew problem (with $G_1$ the function $G$
  at $L=1$), and $\rho=(L/H)\,w(\beta_L)/w(\beta_H)$.
\item \emph{Merged band $(a,b)$.} The coefficient pairs on $(a,L)$ and
  $(H,b)$ are $\vk(a)$ and $\vk(b)$, so $W(L)=G(a)-L$ and $W(H)=G_H(b)-H$,
  with $G_H$ as in \eqref{eq:Gz}.
\end{itemize}
By Lemma~\ref{lem:wmono} below, $w$ is strictly increasing, so in the
disjoint-band case $\rho$ increases with $\beta_L$ and decreases with
$\beta_H$. The effect on the entry region also depends on the window
$[\underline\rho,\overline\rho]$, which moves with $\beta_L$ and $\beta_H$
through $\psi_r$ and $\varphi_r$; we do not pursue this here.
\end{remark}

\begin{lemma}[the single-skew premium is increasing]
\label{lem:wmono}
The premium $w$ of Remark~\ref{rem:premia} is strictly increasing on $(0,1)$.
\end{lemma}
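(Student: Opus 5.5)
The plan is to avoid solving for the band endpoint $\bar a(\beta)$ and to compare the two exit values directly, using Lemma~\ref{lem:ito}. Fix $0<\beta<\beta'<1$ and consider the unit single-skew problem, with one skew level at $1$. Write $v_\beta$ for its exit value, $(\bar a,\bar b)\ni 1$ for its continuation band, and $S'$ for the process driven by coefficient $\beta'$. Since $w(\beta)=G_1(\bar a(\beta))-1=v_\beta(1)-1$, as in the proof of Corollary~\ref{cor:main}, it suffices to show $v_{\beta'}(1)>v_\beta(1)$.

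\emph{Step 1: sign of the one-sided derivatives of $v_\beta$ at $1$.} On $(\bar a,1)$ the coefficient pair is $\vk(\bar a)$ by Lemma~\ref{lem:smoothfit}, so
\[
v_\beta=\kappa_1\bar a^{1-\alpha_1}x^{\alpha_1}+\kappa_2\bar a^{1-\alpha_2}x^{\alpha_2}.
\]
This function is strictly convex, because both coefficients are positive and $\alpha_1<0$, $\alpha_2>1$. It has slope $1$ at $\bar a$, so $v_\beta'(1^-)>1>0$. The transmission condition $\Delta_1(v_\beta)=0$ then gives
\[
v_\beta'(1^+)=\frac{1-\beta}{1+\beta}\,v_\beta'(1^-)>0 .
\]

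\emph{Step 2: $v_\beta$ is strictly $r$-subharmonic at $1$ for the $\beta'$-process.} Apply Lemma~\ref{lem:ito} to $f=v_\beta$ and the process $S'$. On $(\bar a,\bar b)\setminus\{1\}$ we have $(\Lop-r)v_\beta=0$, since the generator off $Z$ does not depend on $\beta$. The local-time coefficient at $1$, computed with $\beta'$, is
\[
\Delta'_1(v_\beta)=(1+\beta')v_\beta'(1^+)-(1-\beta')v_\beta'(1^-)=(\beta'-\beta)\bigl(v_\beta'(1^+)+v_\beta'(1^-)\bigr)>0,
\]
where the equality uses $\Delta_1(v_\beta)=0$.

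\emph{Step 3: conclude.} Let $\tau$ be the exit time of $S'$ from $[\bar a,\bar b]$ started at $1$. Then $\tau$ is a.s.\ finite with finite expectation, and $v_\beta$ is bounded on the band, so the stopped local martingale is a true martingale and optional sampling applies. This gives
\[
\E_1\bigl[e^{-r\tau}v_\beta(S'_\tau)\bigr]=v_\beta(1)+\tfrac12\Delta'_1(v_\beta)\,\E_1\!\int_0^\tau e^{-rs}\,d\ell^1_s(S').
\]
The last expectation is strictly positive, because a path started at $1$ accumulates local time there immediately (the same fact used in Remark~\ref{rem:costs}). Since $v_\beta=g$ at $\bar a$ and $\bar b$, the left side equals $\E_1[e^{-r\tau}S'_\tau]\le v_{\beta'}(1)$. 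Hence $v_{\beta'}(1)>v_\beta(1)$, and $w$ is strictly increasing.

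The main obstacle is Step 1, namely the strict positivity of both one-sided derivatives, which needs the left-band formula and the transmission condition. One also has to check that the unit problem has a genuine two-sided band with $\bar a<1<\bar b$, both endpoints finite and in $(0,\infty)$. This is supplied by Hypothesis~\ref{hyp:exit} and Lemma~\ref{lem:Wlocal}(iv), since $\beta>0$ puts the skew point inside $\Cex$.
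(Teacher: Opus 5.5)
Your proof is correct, and it takes a genuinely different route from the paper's. The paper's proof is pure calculus on the band equations of \M[Thm.~4.1]. It parametrises the level set $G(a)=G(b)=v$ of the strictly convex $G=G_1$ by $v$, uses $D=G-mG'$ to show that $R(v)=D(a(v))/D(b(v))$ increases strictly from $1$ to $+\infty$ on $[1,v^\ast)$, and then inverts $\pi(\beta)=R(v)$. You never solve or differentiate those equations. Instead you take $v_\beta$ as a test function for the $\beta'$-process: it stays $r$-harmonic off $1$ but is strictly $r$-subharmonic at $1$, since $\Delta'_1(v_\beta)=(\beta'-\beta)\bigl(v_\beta'(1^+)+v_\beta'(1^-)\bigr)>0$. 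So the old exit rule does strictly better than $v_\beta(1)$ for the new process.

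The paper's calculation also delivers uniqueness of the solution of the band equations for each $\beta$, with no probability at all. Your comparison argument needs only local information about $v_\beta$, and it gives more. It shows $v_{\beta'}>v_\beta$ on the whole old band, since from any $x\in(\bar a,\bar b)$ the path reaches $1$ before leaving with positive probability, so the band widens with $\beta$. It also carries over to several skew levels whenever the one-sided slopes at the level being raised have positive sum. For instance, your Step~1 applies verbatim to $W(L)=G(a)-L$ in Corollary~\ref{cor:main}. This shows that $W(L)$ increases strictly with $\beta_L$ in every regime, which the paper's single-skew calculation does not reach.

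Two minor points of hygiene. First, Lemma~\ref{lem:ito} asks for $C^2$ up to the ends of each skew-free piece, and $v_\beta$ fails this at $\bar a$ and $\bar b$. Apply it instead to the harmonic extension equal to $e_x\vk(\bar a)$ on $(0,1]$ and $e_x\vk(\bar b)$ on $[1,\infty)$, which coincides with $v_\beta$ up to $\tau$. Second, the band geometry with smooth fit at $\bar a$ comes from \M[Thm.~4.1], which is already built into the definition of $w$. It does not follow from Hypothesis~\ref{hyp:exit} and Lemma~\ref{lem:Wlocal}(iv), which give only $1\in\Cex$.
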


\begin{proof}
Take $L=1$ and write $G:=G_1$, so $G(m)=\kappa_1m^{1-\alpha_1}+\kappa_2m^{1-\alpha_2}$,
and $D(m):=\kappa_1\alpha_1m^{1-\alpha_1}+\kappa_2\alpha_2m^{1-\alpha_2}$.
By \M[Thm.~4.1] the single-skew band $(a,b)$ with skew $\beta\in(0,1)$ solves
\[
G(a)=G(b)=:v,\qquad \pi(\beta):=\frac{1+\beta}{1-\beta}=\frac{D(a)}{D(b)},
\]
and $G$ is strictly convex with minimum $G(1)=1$, so $a<1<b$ and
$w(\beta)=G(a)-1=v-1$. It therefore suffices to show that $v$ is strictly
increasing in $\beta$. Since
$\kappa_1(1-\alpha_1)m^{1-\alpha_1}+\kappa_2(1-\alpha_2)m^{1-\alpha_2}=mG'(m)$,
we have $D=G-mG'$, so on the level set
\[
D(a)=v-aG'(a),\qquad D(b)=v-bG'(b).
\]
Parametrise the level set by $v>1$. Then $a(v)$ decreases and $b(v)$
increases, with $a'(v)=1/G'(a)$ and $b'(v)=1/G'(b)$, and
\[
\frac{d}{dv}D(a(v))=-\frac{aG''(a)}{G'(a)}>0,\qquad
\frac{d}{dv}D(b(v))=-\frac{bG''(b)}{G'(b)}<0,
\]
because $G''>0$, $G'(a)<0$ and $G'(b)>0$. Here $D(a)=v-aG'(a)>0$ for every
$v$, and at any solution the second equation gives
$D(b)=(1-\beta)D(a)/(1+\beta)>0$, so solutions can only occur where
$D(b(v))>0$. That set is an initial interval: $D(b(v))$ decreases from
$D(1)=\kappa_1\alpha_1+\kappa_2\alpha_2=1$ at $v=1$ (where $a=b=1$), and it
becomes negative for large $v$, since then
$D(b)\sim(\kappa_1-\kappa_1(1-\alpha_1))b^{1-\alpha_1}=\kappa_1\alpha_1b^{1-\alpha_1}$.
So $D(b(v))>0$ exactly on some $[1,v^\ast)$, and there
\[
R(v):=\frac{D(a(v))}{D(b(v))},\qquad
R'(v)=\frac{\frac{d}{dv}D(a(v))\,D(b(v))-D(a(v))\,\frac{d}{dv}D(b(v))}{D(b(v))^2}>0,
\]
both terms of the numerator being positive. Thus $R$ increases strictly on
$[1,v^\ast)$, from $R(1)=1$ to $+\infty$. Since $\pi$ is a strictly increasing
bijection from $(0,1)$ onto $(1,\infty)$ and $\pi(\beta)=R(v)$, each
$\beta\in(0,1)$ determines a unique $v\in(1,v^\ast)$, and $v$, hence
$w(\beta)=v-1$, is strictly increasing in $\beta$.
\end{proof}

\section{The remaining sign pairs}
\label{sec:othersigns}

With $G_H$ as in \eqref{eq:Gz}, we have the following.

\begin{corollary}[the other sign pairs]
\label{cor:othersigns}
Let $n=2$, $z_1=L<z_2=H$.
\begin{enumerate}
\item[(a)] If $\beta_L,\beta_H<0$, then $W\equiv0$ and $u\equiv0$: entry is
  never strictly profitable.
\item[(b)] If $\beta_L<0<\beta_H$, assume Hypothesis~\ref{hyp:exit} and that
  the component of $\Cex$ containing $H$ is an interval $(a,b)$ whose upper
  end $b>H$ is a smooth-fit boundary, as in the geometries of
  \M[Prop.~8.2 and Rem.~8.3]. Then the entry region is $\{H\}$ and
  \[
  u(x)=\bigl(G_H(b)-H\bigr)\,\E_x\bigl[e^{-r\tau_H}\bigr].
  \]
  In the geometry of \M[Prop.~8.2], where $(a,b)$ is the single-skew band at
  $H$, $G_H(b)-H=H\,w(\beta_H)$ with $w$ as in Remark~\ref{rem:premia}.
\end{enumerate}
\end{corollary}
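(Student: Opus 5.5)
Both parts follow from Theorem~\ref{thm:duality} once $\Pplus$ is identified and the premium at the unique support level is evaluated. For (a), $\beta_L,\beta_H<0$ gives $\Pplus=\emptyset$. Theorem~\ref{thm:duality}(a) then applies directly: $g$ is $r$-superharmonic by \eqref{eq:gfacts}, so $v=g$ by \M[Prop.~8.1], and hence $W\equiv0$ and $u\equiv0$. Here Hypothesis~\ref{hyp:exit} holds with $\Cex=\emptyset$.

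For (b), $\beta_L<0<\beta_H$ gives $\Pplus=\{H\}$, so $k=1$. A single point lies on its own hull, since $K$ is the chord from the origin to $(y_1,h_1)$ followed by a flat part. Hence $\Gamma=\{H\}$, and Theorem~\ref{thm:duality}(b) gives $\{u=W\}=\{H\}$. As in the proof of Corollary~\ref{cor:main}, \eqref{eq:uhull} reads $u=W(H)\psi_r/\psi_r(H)$ on $(0,H]$ and $u=W(H)\varphi_r/\varphi_r(H)$ on $[H,\infty)$. By Lemma~\ref{lem:hitting} this is $W(H)\,\E_x[e^{-r\tau_H}]$.

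It remains to identify $W(H)$. By the assumption on the geometry, $H\in(a,b)$ with $b>H$ a smooth-fit boundary, and $Z$ contains no skew level in $(H,b)$ because $L<H$. On $(H,b)$ the exit value $v$ is $r$-harmonic (Hypothesis~\ref{hyp:exit}), so $v=Ax^{\alpha_1}+Bx^{\alpha_2}$ there. Smooth fit at $b$ gives $v(b)=b$ and $v'(b^-)=1$, so Lemma~\ref{lem:smoothfit} yields $(A,B)^{\!\top}=\vk(b)$. Continuity of $v$ at $H$ then gives $v(H)=e_H\vk(b)=G_H(b)$ by \eqref{eq:Gz}, so $W(H)=G_H(b)-H$.

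For the final claim, in the geometry of \M[Prop.~8.2] the band $(a,b)$ is the single-skew band at $H$. The scale-invariance argument of Remark~\ref{rem:premia} then applies: the single-skew problem at $H$ is $H$ times the unit problem. This gives $W(H)=H\,w(\beta_H)$, where $w(\beta_H)=G_1(\bar a(\beta_H))-1$. One could equally write this through $b$, using $G_1(\bar a)=G_1(\bar b)$ from \M[Thm.~4.1].

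The main obstacle is confirming that the hypotheses of Theorem~\ref{thm:duality} genuinely hold in this geometry, in particular that Hypothesis~\ref{hyp:exit} covers \M[Prop.~8.2 and Rem.~8.3]. This needs care when the band around $H$ might extend below $L$. In that case $v$ crosses $L$ with $\Delta_L(v)=0$, and nothing changes, since only the piece on $(H,b)$ is used. The statement assumes these properties, so the rest is bookkeeping.
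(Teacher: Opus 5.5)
Your proposal is correct and follows the paper's proof essentially step for step. Part (a) comes from Theorem~\ref{thm:duality}(a); part (b) uses the single-point hull $\Gamma=\{H\}$ with Lemma~\ref{lem:hitting}, then smooth fit on the skew-free interval $(H,b)$ via Lemma~\ref{lem:smoothfit} to get $W(H)=G_H(b)-H$, then scale invariance. The one small difference is that the paper obtains $H\in\Cex$, and hence $a<H<b$, from Lemma~\ref{lem:Wlocal}(iv) instead of reading it off the assumption.
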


\begin{proof}
(a) is Theorem~\ref{thm:duality}(a). For (b), $\Pplus=\{H\}$, so $k=1$ and
$\Gamma=\{H\}$, and as in the proof of Corollary~\ref{cor:main},
\eqref{eq:uhull} and Lemma~\ref{lem:hitting} give
$u=W(H)\,\E_x[e^{-r\tau_H}]$. By Lemma~\ref{lem:Wlocal}(iv), $H\in\Cex$, so
$a<H<b$. The interval $(H,b)$ contains no skew point, and smooth fit at $b$
fixes the coefficient pair of $v$ there as $\vk(b)$
(Lemma~\ref{lem:smoothfit}); hence
$v(H)=e_H\vk(b)=G_H(b)$ and $W(H)=G_H(b)-H$. The last claim follows
from the scale invariance used in Remark~\ref{rem:premia}.
\end{proof}

In particular the lower level $L$, which pushes the price down, never
attracts entry, whether or not the exit band reaches it, and even when the
exit band pins at $a=L$ as in \M[Rem.~8.3].

\section{Conclusion}
\label{sec:disc}

For a geometric multi-skew Brownian motion, the decision to buy a share that
will then be sold optimally is governed by the exit premium $W=v-g$. Inside
the exit continuation region this premium is strictly $r$-subharmonic, so
buying is never optimal there, except at the levels where the price receives
an upward local-time push. The entry problem therefore reduces to a finite
one (Theorem~\ref{thm:duality}): its value is a concave hull through
finitely many points, and its entry region is a set of support levels. With
one support and one resistance, one buys exactly at the support, in every
exit regime (Corollary~\ref{cor:main}); with two supports, a closed-form
window decides between the two levels and both
(Corollary~\ref{cor:samesign}). This contrasts with the entry intervals of the
buy-low, sell-high literature and of the regime-switching model of
\citet{Henderson26}.

\paragraph{Open directions.}
\begin{enumerate}
\item[(i)] Transaction costs (Remark~\ref{rem:costs}). The hull with a
  possibly negative reward, and the threshold cost above which a support
  drops out of the entry region.
\item[(ii)] Repeated round trips. Each cycle buys at a point of $\Gamma$ and
  sells on leaving an exit band, which gives an optimal switching problem
  whose entry set is presumably still contained in $\Pplus$; for a
  Cox--Ingersoll--Ross process, \citet{LeungLiWang14} find that the switching
  thresholds lie strictly inside those of a single round trip.
\item[(iii)] Many supports. The combinatorics of the hull in
  Theorem~\ref{thm:duality} as $n$ grows, and the comparative statics of
  $\Gamma$ in the $\beta_i$, starting with how the window of
  Corollary~\ref{cor:samesign} moves with $\beta_L$ and $\beta_H$.
\end{enumerate}

\section*{Declaration on the use of generative AI}
The author used Anthropic's Claude as an assistant in this work, including
for generating and developing ideas, for drafting and checking proofs, for
numerical verification, for the literature search, and for preparing the
manuscript. The author checked all definitions, statements and proofs,
reviewed and edited the output, and takes full responsibility for the
content of the paper.

\appendix
\section{Numerical check}
\label{app:num}

We computed $v$ and $u$ by brute force,
as least concave majorants of $g/\varphi_r$ and $W/\varphi_r$ in the
$F$-coordinate on a grid of $4\times10^5$ geometrically spaced points in
$[10^{-4},10^{4}]$; the computation assumes none of the results above.
Parameters: $r=0.05$, $\mu=0.01$, $\sigma^2=0.04$, $L=1$.

\begin{table}[h]
\centering
\small
\begin{tabular}{rrcccccc}
\toprule
$H$ & $\beta_H$ & regime & $a$ & $b$ & $W(L)$ (brute force) & $G(a)-L$ &
contact set\\
\midrule
1.60 & $-0.30$ & A & 0.6253 & 1.4654 & 0.182971 & 0.182971 & $\{1\}$\\
1.15 & $-0.20$ & B & 0.6683 & 1.3422 & 0.137656 & 0.137656 & $\{1\}$\\
1.15 & $-0.50$ & C & 0.7003 & 1.1500 & 0.109221 & 0.109221 & $\{1\}$\\
\bottomrule
\end{tabular}
\caption{Main case, $\beta_L=0.9$. On $[0.3,3]$ the brute-force $u$ agrees
with \eqref{eq:umain} to within $3\times10^{-10}$ (grid interpolation), and
$u-W\ge0$ with equality only at $L$.}
\end{table}

\begin{table}[h]
\centering
\small
\begin{tabular}{rrrrrrccc}
\toprule
$H$ & $\beta_L$ & $\beta_H$ & $W(L)$ & $W(H)$ & $\rho$ &
$[\underline\rho,\overline\rho]$ & rule & contact set\\
\midrule
3.0 & 0.9 & 0.9 & 0.18297 & 0.54891 & 0.3333 & $[0.2794,\,36.91]$ & $\{L,H\}$ & $\{1,3\}$\\
3.0 & 0.5 & 0.9 & 0.06028 & 0.54891 & 0.1098 & $[0.2091,\,36.91]$ & $\{H\}$ & $\{3\}$\\
3.0 & 0.9 & 0.5 & 0.18297 & 0.18083 & 1.0118 & $[0.2794,\,8.022]$ & $\{L,H\}$ & $\{1,3\}$\\
1.5 & 0.9 & 0.9 & 0.41030 & 0.29383 & 1.3964 & $[0.7845,\,11.28]$ & $\{L,H\}$ & $\{1,1.5\}$\\
1.5 & 0.6 & 0.9 & 0.25058 & 0.28616 & 0.8757 & $[0.6895,\,11.28]$ & $\{L,H\}$ & $\{1,1.5\}$\\
1.3 & 0.9 & 0.3 & 0.25435 & 0.10058 & 2.5287 & $[0.8934,\,1.718]$ & $\{L\}$ & $\{1\}$\\
2.2 & 0.7 & 0.8 & 0.11463 & 0.32380 & 0.3540 & $[0.4135,\,11.91]$ & $\{H\}$ & $\{2.2\}$\\
\bottomrule
\end{tabular}
\caption{Same-sign case: the window rule of Corollary~\ref{cor:samesign}
against the brute-force contact set. All three outcomes occur. The rows with
$H=3$ are disjoint bands, and $W(3)=3\,W(1)$ at equal $\beta$, as
Remark~\ref{rem:premia} predicts.}
\end{table}

\begin{table}[h]
\centering
\small
\begin{tabular}{rrrccc}
\toprule
$H$ & $\beta_L$ & $\beta_H$ & exit band & $W(H)$ & contact set\\
\midrule
1.5 & $-0.5$ & $-0.5$ & none & $0$ & none\\
1.3 & $-0.9$ & $-0.3$ & none & $0$ & none\\
3.0 & $-0.5$ & $0.9$ & $(1.876,\,4.396)$ & 0.548912 & $\{3\}$\\
1.2 & $-0.2$ & $0.9$ & $(0.889,\,1.750)$ & 0.213649 & $\{1.2\}$\\
1.2 & $-0.6$ & $0.9$ & $(1.000,\,1.749)$ & 0.212365 & $\{1.2\}$\\
\bottomrule
\end{tabular}
\caption{The other sign pairs (Corollary~\ref{cor:othersigns}). For
$\beta_L,\beta_H<0$ the brute-force $W$ and $u$ vanish to within
$2\times10^{-12}$. For $\beta_L<0<\beta_H$ the contact set is $\{H\}$ in
all three exit geometries (band above $L$, band straddling $L$, band pinned
at $a=L$), and $u$ agrees with $W(H)\,\E_x[e^{-r\tau_H}]$ to within
$10^{-11}$ on $[0.3,6]$. In the first of these rows the exact value
$G_H(b)-H=H\,w(0.9)=0.548912$ agrees with the brute-force $W(H)$.}
\end{table}

Consistently with Lemma~\ref{lem:wmono}, the same parameters give the
single-skew premium $w(\beta)$ of Remark~\ref{rem:premia} strictly increasing
on the grid $\beta=0.05,0.10,\dots,0.95$, from $6.2\times10^{-4}$ to $0.202$.


\end{document}